\documentclass[a4paper,12pt]{article}

\usepackage[english]{babel}
\usepackage[T1]{fontenc}
\usepackage[utf8]{inputenc}

\usepackage[margin=2.5cm]{geometry}
\usepackage{setspace}
\usepackage[tt=false,type1=true]{libertine}
\usepackage[varqu]{zi4}
\usepackage[libertine]{newtxmath}

\usepackage[dvipsnames]{xcolor}
\definecolor{linkblue}{RGB}{25,70,140}

\usepackage{booktabs}
\usepackage{multirow}
\usepackage{array}
\usepackage{tabularx}
\usepackage{makecell}

\usepackage{algorithm}
\usepackage{algpseudocode}

\usepackage{amsthm}
\usepackage{mathtools}
\usepackage{textcomp}

\usepackage{graphicx}
\usepackage{tikz}
\usetikzlibrary{
    positioning,
    arrows.meta,
    decorations.pathreplacing,
    shapes.geometric
}

\usepackage{enumitem}
\setlist{noitemsep, topsep=3pt}

\usepackage[raggedright]{titlesec}
\titleformat*{\section}{\large\sffamily\bfseries}
\titleformat*{\subsection}{\normalsize\sffamily\bfseries}
\titleformat{\paragraph}[hang]
  {\normalfont\normalsize\bfseries}{\theparagraph}{1em}{}
\titlespacing*{\paragraph}{0pt}{1.5ex plus 0.2ex}{0.5em}
\usepackage{authblk}

\usepackage[round,authoryear]{natbib}

\usepackage[
    colorlinks=true,
    linkcolor=linkblue,
    citecolor=linkblue,
    urlcolor=linkblue
]{hyperref}
\newtheorem{theorem}{Theorem}[section]
\newtheorem{proposition}[theorem]{Proposition}
\newtheorem{corollary}[theorem]{Corollary}
\newtheorem{lemma}[theorem]{Lemma}

\theoremstyle{definition}
\newtheorem{definition}[theorem]{Definition}
\newtheorem{example}[theorem]{Example}

\theoremstyle{remark}
\newtheorem{remark}[theorem]{Remark}

\newenvironment{myproof}
  {\par\smallskip\noindent\textit{Proof.}\;}
  {\hfill$\blacksquare$\par\medskip}

\newcommand{\K}{\mathcal{K}}
\newcommand{\I}{\mathcal{I}}
\newcommand{\Fringe}{\mathrm{Fringe}}
\newcommand{\down}[1]{\mathord{\downarrow}#1}
\DeclareMathOperator*{\argmax}{arg\,max}

\newcommand{\cmark}{\ensuremath{\checkmark}}
\newcommand{\xmark}{\ensuremath{\times}}

\title{\textbf{Exploration Space Theory}\\[0.3em]
\Large Formal Foundations for Prerequisite-Aware Location-Based Recommendation}

\author[1]{Madjid Sadallah}
\affil[1]{\small LIRIS, Universit\'{e} Claude Bernard Lyon~1, CNRS,
          43 boulevard du 11 novembre 1918, 69622 Villeurbanne, France\\
          \texttt{madjid.sadallah@liris.cnrs.fr}}

\date{}

\begin{document}
\maketitle
\hrule\medskip

\begin{abstract}
Location-based recommender systems have achieved considerable sophistication, yet
none provides a formal, lattice-theoretic representation of prerequisite dependencies
among points of interest---the semantic reality that meaningfully experiencing certain
locations presupposes contextual knowledge gained from others---nor the structural
guarantees that such a representation entails. We introduce Exploration Space Theory
(EST), a formal framework that transposes Knowledge Space Theory into location-based
recommendation. We prove that the valid user exploration states---the order ideals of
a surmise partial order on points of interest---form a finite distributive lattice and
a well-graded learning space; Birkhoff's representation theorem, combined with the
structural isomorphism between lattices of order ideals and concept lattices, connects
the exploration space canonically to Formal Concept Analysis. These structural results
yield four direct consequences: linear-time fringe computation, a validity certificate
guaranteeing that every fringe-guided recommendation is a structurally sound next
step, sub-path optimality for dynamic-programming path generation, and provably
existing structural explanations for every recommendation. Building on these
foundations, we specify the Exploration Space Recommender System (ESRS)---a memoized
dynamic program over the exploration lattice, a Bayesian state estimator with beam
approximation and EM parameter learning, an online feedback loop enforcing the
downward-closure invariant, an incremental surmise-relation inference pipeline, and
three cold-start strategies, the structural one being the only approach in the
literature to provide a formal validity guarantee conditional on the correctness of
the inferred surmise relation. All results are established through proof and
illustrated on a fully traced five-POI numerical example.
\end{abstract}

\noindent\textbf{Keywords:}
\small
Knowledge Space Theory — Formal Concept Analysis — Distributive Lattice Theory — Location-Based Recommender Systems — Sequential Recommendation — Exploration Paths — Urban Computing — Dynamic Programming

\medskip\hrule\bigskip

\section{Introduction}
\label{sec:intro}

Urban exploration is fundamentally a process of structured discovery. A visitor arriving in an unfamiliar city does not face a featureless scatter of locations but a partially ordered space of experiences in which certain visits naturally build upon others. Understanding the architectural evolution of a historic quarter provides the contextual grounding necessary to appreciate a specialist gallery housed within it; experiencing a city's central food market gives experiential meaning to a subsequent tour of its artisanal producers. This ordering is not a temporal sequence imposed by geography or logistics, but a partial order of cognitive and experiential prerequisites: a semantic structure that should be explicitly represented in any recommendation system that aspires to support genuine discovery rather than mere retrieval.

Current Location-Based Recommender Systems (LBRS) have made substantial advances in personalization, context-awareness, and sequential modeling \citep{ricci2011introduction, adomavicius2005toward, yuan2013time}. Collaborative filtering identifies users with similar behavioral patterns; deep sequential models---ranging from Markov chains to Recurrent Neural Networks and Transformer-based architectures \citep{rendle2010fpmc, hidasi2016session, kang2018self, yang2022getnext}---capture transition regularities from historical trajectories with impressive empirical performance. Yet across all these paradigms, a critical dimension remains absent: an explicit, formally grounded, and interpretable model of why certain sequences of locations are more meaningful than others. These systems learn correlational shadows of an underlying semantic structure they cannot represent. They cannot distinguish a visit sequence driven by geographic proximity from one driven by a genuine prerequisite dependency. Consequently, users in unfamiliar environments frequently receive fragmented lists of popular venues rather than coherent, structured journeys \citep{gavalas2014survey}.

To address this structural gap, we turn to Knowledge Space Theory \citep[KST;][]{doignon1985knowledge, doignon1999knowledge, falmagne2011learning}. Originally designed to formalize knowledge acquisition in educational domains, KST defines a knowledge space as a family of states closed under union, structured by surmise relations in which mastering one item implies having mastered its prerequisites. Over four decades, this theory has proven effective in driving adaptive assessment and intelligent tutoring systems \citep{corbett1994knowledge, desmarais2012real}. The central thesis of the present work is that the mathematical apparatus of KST translates naturally, rigorously, and productively to the domain of urban exploration.

By transposing KST to LBRS, the representational paradigm shifts fundamentally. Rather than modeling a user solely through latent preference embeddings, we propose to represent the user's cumulative exploration state: a structured object encoding what the user has meaningfully visited and thereby determining what they are now ready to discover. We prove that any surmise relation over a finite set of points of interest generates a collection of valid exploration states forming a finite distributive lattice. This algebraic structure provides immediate algorithmic guarantees: linear-time computation of the exploration fringe, guaranteed valid fringe-guided transitions, and sub-path optimal dynamic-programming-based path recommendation. These guarantees are not heuristic claims but consequences of proven mathematical propositions.

Translating these results into a functional architecture, we propose the Exploration Space Recommender System (ESRS). The ESRS integrates a unified interest score that synthesizes user preferences, location properties, collaborative signals, and structural state-relative constraints. It employs a BLIM-inspired probabilistic model to handle uncertainty in user state estimation, and offers three KST-grounded cold-start strategies, one of which---purely structural cold-start---provides a formal validity guarantee
unavailable in any existing approach (see \S\ref{sec:coldstart} and Limitation~L5).

The present work is a conceptual and theoretical contribution. We provide a comprehensive architectural specification, a complete algorithmic pipeline, and a fully traced numerical example. Empirical evaluation against state-of-the-art baselines is explicitly identified as the primary direction for future work; the objective here is to establish the mathematical foundations of EST, prove their correctness, and demonstrate how the framework addresses structural limitations of current approaches that are not resolvable by purely statistical means.
\section{Background and Related Work}
\label{sec:background}

\subsection{Location-Based Recommendation Systems}
\label{sec:bg_lbrs}

Location-Based Recommender Systems sit at the intersection of geographic information
science, personalization, and machine learning. Their core task is to produce, for a
given user and spatiotemporal context, a ranked list of Points of Interest (POIs)
likely to be relevant and engaging.

\textbf{Collaborative filtering and geographic influence.}
The seminal observation of \citet{ye2011exploiting} is that the probability of
visiting a POI decreases as a power-law function of distance from a user's home or
current position. This spatial structure has become a standard component of LBRS
architecture \citep{schafer2007collaborative}.

\textbf{Context integration.}
\citet{yuan2013time} demonstrated that visited POI types vary significantly by time
of day and day of week. More broadly, \citet{adomavicius2011context} established
the principle of context-aware recommendation: interactions should be modeled as
triples $(u, i, c)$ rather than pairs $(u, i)$. Content-based filtering leverages
POI attribute structure \citep{pazzani2007content, lops2011content}; hybrid
architectures address individual weaknesses of each paradigm \citep{burke2002hybrid}.

\textbf{Deep learning and graph-based approaches.}
\citet{he2017neural} introduced Neural Collaborative Filtering, replacing the inner
product of matrix factorization with a multi-layer perceptron. Graph Neural Networks
extended these capabilities by explicitly modeling relational structure
\citep{fan2019graph, wu2021graphsurvey}, propagating information along user-POI
interaction graphs, social graphs, and category hierarchies.

\textbf{Persistent structural gap.}
Despite this sophistication, all the above systems fundamentally model individual POI
relevance or pairwise transition patterns. They have no mechanism to ask whether the
user is semantically \emph{ready} for a given POI---whether the cognitive and
experiential prerequisites for meaningfully engaging with it have been met. EST
addresses this gap by modeling the user's cumulative exploration state as a
first-class object and encoding semantic readiness through the surmise relation.

\subsection{Sequence-Aware and Session-Based Recommendation}
\label{sec:bg_sequential}

\textbf{Markov-chain approaches.}
\citet{rendle2010fpmc} introduced FPMC, combining personalized user factors with
first-order Markov transition patterns for next-POI prediction. The first-order
Markov assumption limits ability to capture long-range dependencies.

\textbf{RNN-based approaches.}
\citet{hidasi2016session} introduced GRU4Rec, using Gated Recurrent Units to model
session-level recommendation. \citet{feng2017deepmove} proposed DeepMove, using an
attention-based recurrent architecture separately modeling long-term and short-term
mobility patterns.

\textbf{Transformer-based approaches.}
Self-attention \citep{kang2018self, sun2019bert4rec} has superseded RNNs as the
dominant architecture for sequential recommendation. SASRec achieves state-of-the-art
performance on multiple benchmarks. GETNext \citep{yang2022getnext} integrates a
graph-based global POI transition model with a local Transformer encoder. STAN
\citep{luo2021stan} applies spatio-temporal self-attention weighting past check-ins
by spatial proximity and temporal recency.

\textbf{The fundamental representational boundary.}
All sequential models learn to predict the next item by identifying statistical
regularities in observed sequences. High co-occurrence does not imply a prerequisite
relationship, and a prerequisite relationship may not manifest as high co-occurrence
if most users satisfy it via indirect paths, if the prerequisite is universally
satisfied (no variance), or if training data is sparse. Sequential models therefore
cannot reliably infer, represent, or reason about prerequisite dependencies even if
such dependencies structure the data. EST introduces an orthogonal modeling
dimension---the formal prerequisite structure---that sequential models cannot
represent by design.

\subsection{The Tourist Trip Design Problem}
\label{sec:bg_ttdp}

The Tourist Trip Design Problem (TTDP) \citep{vansteenwegen2011orienteering,
gavalas2014survey} asks for a subset of POIs and an ordering maximizing total score
without violating a time budget. It is a generalization of the NP-hard Orienteering
Problem. \citet{lim2018personalized} introduced a personalized variant in which scores
are functions of user interest profiles, but retains the core assumption that POIs
are independent: there are no inter-POI dependencies making the desirability of one
contingent on having visited another.

ESRS and the TTDP differ in three structural dimensions. First, ESRS interest scores
depend on the user's current exploration state, which changes after each visit.
Second, ESRS models prerequisite dependencies as the primary structural element.
Third, the TTDP does not model cumulative exploration knowledge. Despite these
differences, the two frameworks are complementary: the ESRS path optimization component can incorporate TTDP-style time budget and time-window constraints as side constraints on
the DP state space, as we formalize in the algorithmic section
(Remark~\ref{rem:ttdp_dp} in \S\ref{sec:dp}).

\subsection{Knowledge Space Theory}
\label{sec:bg_kst}

Knowledge Space Theory (KST) was developed by \citet{doignon1985knowledge} to model
the structure of human knowledge and the process of knowledge acquisition.

A \emph{knowledge structure} $(Q, \mathcal{K})$ consists of a finite domain $Q$ of
items and a family $\mathcal{K} \subseteq 2^Q$ of \emph{knowledge states}, satisfying
$\emptyset \in \mathcal{K}$ and $Q \in \mathcal{K}$. A knowledge structure is a
\emph{knowledge space} if $\mathcal{K}$ is closed under arbitrary unions; it is a
\emph{learning space} if additionally closed under non-empty intersections and
satisfying the well-graded property \citep{falmagne2011learning}.

The \emph{surmise relation} $\preceq$ on $Q$ is defined by $q \preceq q'$ if
mastering $q'$ implies having mastered $q$. The family of all downward-closed subsets
of $(Q, \preceq)$ is a knowledge space \citep{doignon1999knowledge} and in fact a
learning space \citep{falmagne2011learning}.

The \emph{fringe} of a knowledge state $K$ is the set of items $q \notin K$ such
that $K \cup \{q\}$ is a valid state. In adaptive assessment, the fringe identifies
items that are one step ahead of the current state and have maximal diagnostic value.
This operational role is inherited directly by the exploration fringe in ESRS.

The BLIM \citep{falmagne1988stochastic} provides probabilistic inference machinery
connecting observed responses to knowledge state estimates via Bayes' rule with
item-specific false-positive ($\beta_i$) and false-negative ($\eta_i$) error
parameters. KST has been deployed in the ALEKS adaptive learning platform
\citep{falmagne2011learning} and applied to structured path recommendation in
technology-enhanced learning \citep{sadallah2023learning}.

\subsection{Formal Concept Analysis and Distributive Lattice Theory}
\label{sec:bg_fca}

Formal Concept Analysis (FCA) \citep{wille1982restructuring, ganter1999formal}
derives formal concepts from a binary object-attribute relation and organizes them
into a concept lattice---a complete lattice capturing the hierarchical structure of
the conceptual space.

The relevance of FCA to EST is through Birkhoff's representation theorem
\citep{birkhoff1937rings, davey2002lattices}: every finite distributive lattice is
isomorphic to the lattice of order ideals of a unique finite partial order. Since
the exploration space $(\K(Q, \preceq), \subseteq)$ is a finite distributive lattice
(Proposition~\ref{prop:birkhoff}), it is canonically identified with the ideal lattice
of $(Q, \preceq)$, connecting the framework to FCA and enabling direct use of concept
lattice visualization tools and incremental construction algorithms.

\subsection{Constraint-Based Recommender Systems}
\label{sec:bg_cbrs}

Constraint-based recommender systems \citep{felfernig2006constraint,
felfernig2011knowledge} (CBRS) encode a formalized domain model---a set of logical constraints
over item attributes---and use constraint reasoning to identify items consistent with
user requirements and domain rules.

EST shares with constraint-based recommendation the commitment to an explicit domain
model and associated interpretability and correctness guarantees. However, the
frameworks differ fundamentally. Constraint-based systems model compatibility
constraints at a single decision point---a one-shot filtering problem with a static
user requirement set. EST models temporal prerequisite dependencies across multiple
time steps: accessibility depends on the user's cumulative exploration trajectory,
a dynamic object changing with each confirmed visit. Moreover, constraint-based
systems use attribute-based constraints (``the item must have property $P$''), while
EST uses order-based prerequisites that generate the lattice structure underlying all
algorithmic guarantees.

\subsection{Cold-Start in Recommender Systems}
\label{sec:bg_coldstart}

Cold-start arises when interaction data is absent for a user or item
\citep{schein2002cold}. Standard mitigation strategies include content-based
bootstrapping, demographic filtering, active learning, and stereotype-based
initialization. In the LBRS context, a new user in an unfamiliar city faces cold-start
in two senses simultaneously: no city-specific interaction history and no established
position in the exploration state space.

The KST framework offers a qualitatively different cold-start mechanism:
$\Fringe(\emptyset) = \{q \in Q : \nexists\, p \in Q \text{ with } p \prec q\}$
comprises POIs with no prerequisites, unconditionally accessible to any user. These
are the natural entry points into the Exploration Space, identified directly from the
surmise relation without any interaction data. This structural strategy is not an approximation but an exact
consequence of the mathematical definition. It provides a formal
validity guarantee conditional on the correctness of the surmise
relation $\preceq$, a type of structural guarantee that purely
statistical cold-start strategies do not provide.

\subsection{Summary and Positioning}
\label{sec:bg_summary}

To the best of our knowledge, no existing LBRS paradigm
simultaneously models (i) explicit prerequisite dependencies
between POIs, (ii) a formal user exploration state evolving through the exploration
space as visits occur, (iii) multi-step path recommendation grounded in both
preferences and structural accessibility, and (iv) a principled cold-start strategy
derived from the mathematical structure of the domain model. The combination of these
four properties is the defining contribution of EST.
Table~\ref{tab:bg_positioning} summarizes this positioning.

\begin{table}[htb]
\centering
\small
\caption{Structural positioning of ESRS relative to surveyed LBRS paradigms.
\cmark\ = supported by design; $\circ$ = partially or approximately supported;
\xmark\ = not supported. All entries reflect design properties, not performance claims.}
\label{tab:bg_positioning}
\setlength{\tabcolsep}{5pt}
\begin{tabular}{lccccc}
\toprule
\textbf{Structural dimension}
  & \makecell{\textbf{CF/}\\\textbf{Content}}
  & \makecell{\textbf{Markov/}\\\textbf{RNN/Transf.}}
  & \makecell{\textbf{TTDP}}
  & \makecell{\textbf{CBRS}}
  & \makecell{\textbf{ESRS}\\\textbf{(ours)}} \\
\midrule
Explicit prerequisite dependencies      & \xmark   & \xmark   & \xmark   & $\circ$  & \cmark \\
Dynamic structured exploration state    & \xmark   & \xmark   & \xmark   & \xmark   & \cmark \\
Formal mathematical foundation          & $\circ$  & $\circ$  & \cmark   & \cmark   & \cmark \\
Multi-step path respecting dependencies & \xmark   & $\circ$  & \cmark   & \xmark   & \cmark \\
Structural cold-start from domain model & \xmark   & \xmark   & \xmark   & $\circ$  & \cmark \\
\bottomrule
\end{tabular}
\end{table}

\section{Formal Foundations of Exploration Space Theory}
\label{sec:foundations}

\subsection{Core Definitions and Conceptual Translation}
\label{sec:core_defs}

\subsubsection{The Conceptual Parallel}

In KST, a \emph{knowledge state} is a set of items a learner has mastered. The
collection of all realizable states is not the full power set but a structured
family reflecting prerequisite dependencies. EST transposes this to the urban
exploration context: (i) an \emph{exploration item} is a Point of Interest; (ii)
an \emph{exploration state} is a set of POIs a user has \emph{meaningfully} visited;
(iii) some POIs presuppose others in the sense that the experiential value of a visit
is substantially reduced without prior engagement with their prerequisites.

\subsubsection{Primary Definitions}

\begin{definition}[Exploration Structure and Exploration Space]
\label{def:exploration_structure}
An \emph{exploration structure} is a pair $(Q, \K)$ where $Q$ is a finite non-empty
set of \emph{exploration items} (Points of Interest) and $\K \subseteq 2^Q$ is a
family of \emph{exploration states}, satisfying $\emptyset \in \K$ and $Q \in \K$.

An exploration structure $(Q, \K)$ is an \emph{exploration space} if $\K$ is closed
under arbitrary unions:
\[
  \forall\, \{K_\lambda\}_{\lambda \in \Lambda} \subseteq \K:\quad
  \bigcup_{\lambda \in \Lambda} K_\lambda \in \K.
\]
It is an \emph{exploration learning space} if $\K$ is additionally closed under
arbitrary non-empty intersections (we adopt the strong formulation of learning
spaces as in \citet[Ch.~3]{falmagne2011learning}):
\[
  \forall\, \emptyset \neq \{K_\lambda\}_{\lambda \in \Lambda} \subseteq \K:\quad
  \bigcap_{\lambda \in \Lambda} K_\lambda \in \K.
\]
\end{definition}

\begin{definition}[Surmise Relation on Exploration Items]
\label{def:surmise}
A \emph{surmise relation} $\preceq$ on $Q$ is a partial order: reflexive, transitive,
and antisymmetric. The semantic interpretation is: $q \preceq q'$ means that
engaging meaningfully with $q'$ presupposes prior meaningful engagement with $q$.
We write $q \prec q'$ for the strict part and $q \lessdot q'$ for the covering
relation. The directed graph of covering relations is the \emph{Hasse diagram}
$H = (Q, E_H)$.
\end{definition}

Reflexivity encodes that a POI always presupposes itself. Transitivity encodes the
chaining of prerequisites: if engaging with the Art Gallery requires the City Museum,
and the Rooftop Bar requires the Art Gallery, then the Rooftop Bar also requires the
City Museum. Antisymmetry rules out circular prerequisites. In the presence
of cycles, items would collapse into equivalence classes under
mutual dependence, and the intended prerequisite semantics
would be ill-defined.

\begin{definition}[Valid Exploration State, Order Ideal, and Principal Ideal]
\label{def:valid_state}
A subset $K \subseteq Q$ is a \emph{valid exploration state} with respect to
$(Q, \preceq)$ if it is \emph{downward closed} (an \emph{order ideal}):
\[
  \forall\, q, q' \in Q:\quad
  q \preceq q' \;\text{and}\; q' \in K \;\Longrightarrow\; q \in K.
\]
The \emph{principal ideal} generated by $q \in Q$ is:
\[
  \down{q} = \{p \in Q : p \preceq q\}.
\]
We denote by $\K(Q, \preceq)$ the family of all valid exploration states.
\end{definition}

\subsubsection{Types of Prerequisite Dependency}

The surmise relation may encode four qualitatively different types of dependency in
the urban context: (i) \emph{geographical containment} (a specialist archive nested
within an old town presupposes an embodied sense of place); (ii) \emph{thematic
progression} (an artisanal cooperative is most meaningful to a visitor who has
developed, at the city's main market, a sense of the broader food culture);
(iii) \emph{institutional or logistical sequencing} (a visitor center provides
orientation scaffolding for subsequent exploration); (iv) \emph{data-driven
inference} (sequential regularities in historical trajectory data serve as an
empirical signal, subject to expert validation).

\subsubsection{Structural Properties}

Two properties of the surmise relation are important. \textbf{Sparsity}: most pairs
of POIs have no prerequisite relationship. Sparsity directly controls the
computational tractability of the exploration space and the degree to which
recommendations are structurally constrained. \textbf{Semantic vs.\ Euclidean
topology}: the surmise relation is entirely independent of geographic distance.
Two adjacent POIs may be semantically unrelated under $\preceq$, while two distant
POIs may be strongly prerequisite-linked. The surmise relation encodes a topology
of meaning orthogonal to the physical city.

\subsection{The Exploration Space as a Well-Graded Learning Space}
\label{sec:learning_space}

\begin{proposition}[Exploration Space is a Well-Graded Learning Space]
\label{prop:knowledge_space}
Let $Q$ be a finite non-empty set and $\preceq$ a surmise relation on $Q$.
The pair $(Q, \K(Q,\preceq))$ is an exploration learning space satisfying:
\begin{enumerate}[label=(\roman*), noitemsep]
  \item $\emptyset \in \K$ and $Q \in \K$;
  \item $\K$ is closed under arbitrary unions;
  \item $\K$ is closed under arbitrary non-empty intersections;
  \item $\K$ is \emph{well-graded}: for any $K \in \K$ with $K \neq \emptyset$,
        there exists a sequence
        $\emptyset = K_0 \subset K_1 \subset \cdots \subset K_m = K$
        with $m = |K|$, where $K_{j+1} \setminus K_j = \{p_j\}$ and
        $p_j \in \Fringe(K_j)$ for all $j$.
\end{enumerate}
\end{proposition}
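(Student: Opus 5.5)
The plan is to verify the four items directly from Definition~\ref{def:valid_state}, using only that $\preceq$ is a partial order on a finite set. The fringe notion is taken as in \S\ref{sec:bg_kst}: $\Fringe(K)=\{q\notin K : K\cup\{q\}\in\K\}$.

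Items (i)--(iii) are routine closure checks. For (i), $\emptyset$ is vacuously downward closed, and $Q$ is trivially so since every $q$ lies in $Q$. For (ii), take a family $\{K_\lambda\}$ of order ideals, $q'\in\bigcup_\lambda K_\lambda$ and $q\preceq q'$. Then $q'\in K_\mu$ for some $\mu$, downward closure of $K_\mu$ gives $q\in K_\mu$, and hence $q$ lies in the union. The empty family yields $\emptyset$, which is covered by (i). For (iii), take a non-empty family, $q'\in\bigcap_\lambda K_\lambda$ and $q\preceq q'$. Then $q\in K_\lambda$ for every $\lambda$, so $q$ lies in the intersection. Non-emptiness is needed only so that the intersection is taken inside $Q$ rather than being undefined. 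Together these show that $(Q,\K(Q,\preceq))$ is an exploration learning space in the sense of Definition~\ref{def:exploration_structure}.

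The substantive step is (iv), and I would prove it by building the chain downward from $K$. For nonempty $K\in\K$, pick $p\in K$ that is $\preceq$-maximal within $K$; one exists because $K$ is finite and $\preceq$ is antisymmetric and transitive, so there are no infinite strictly ascending chains. I then claim that $K\setminus\{p\}$ is an order ideal. Suppose $q\preceq q'$ with $q'\in K\setminus\{p\}$. Then $q\in K$ by downward closure of $K$. If $q=p$, we would have $p\preceq q'$ with $q'\neq p$ and $q'\in K$, which contradicts the maximality of $p$. So $q\in K\setminus\{p\}$. Moreover $p\notin K\setminus\{p\}$ and $(K\setminus\{p\})\cup\{p\}=K\in\K$, so $p\in\Fringe(K\setminus\{p\})$.

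Iterating this step, by induction on $|K|$, removes one element at a time. Each step stays inside $\K$, and the process reaches $\emptyset$ after exactly $|K|$ steps. Reversing the sequence gives $\emptyset=K_0\subset\cdots\subset K_m=K$ with $m=|K|$ and single-element increments $p_j\in\Fringe(K_j)$.

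The main (mild) obstacle is the existence of a maximal element and the check that removing it preserves downward closure. This is precisely where antisymmetry and finiteness are used. Equivalently, one could build the chain upward by adding a $\preceq$-minimal element of $K\setminus K_j$, showing that $K_j\cup\{p\}$ is downward closed because every strict predecessor of $p$ already lies in $K_j$. Either direction works, and I would present the downward induction as the cleaner argument.
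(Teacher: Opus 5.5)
Your proposal is correct and follows the paper's proof essentially step for step. The paper also does routine closure checks for (i)--(iii), and for (iv) removes a $\preceq$-maximal element $p$ of $K$, uses maximality to show $K\setminus\{p\}$ is still an order ideal, iterates, and reverses the chain. The only cosmetic difference is how $p\in\Fringe(K\setminus\{p\})$ is certified: you use the operational characterization $K\cup\{q\}\in\K$ from \S\ref{sec:bg_kst}, whereas the paper checks $\down{p}\setminus\{p\}\subseteq K\setminus\{p\}$ from Definition~\ref{def:fringe}, and the two are trivially equivalent.
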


\begin{myproof}
\textbf{(i).} $\emptyset$ satisfies downward closure vacuously. $Q$ contains every
element of the domain by definition.

\textbf{(ii).} Let $K = \bigcup_{\lambda} K_\lambda$. If $q \preceq q'$ and $q' \in K$,
then $q' \in K_{\lambda_0}$ for some $\lambda_0$. Since $K_{\lambda_0}$ is an order
ideal, $q \in K_{\lambda_0} \subseteq K$.

\textbf{(iii).} Let $K = \bigcap_\lambda K_\lambda$ (non-empty family). If $q \preceq q'$
and $q' \in K$, then $q' \in K_\lambda$ for every $\lambda$, so $q \in K_\lambda$
for every $\lambda$, giving $q \in K$.

\textbf{(iv).} We use the following claim: for any non-empty order ideal $K$ and
any maximal element $m$ of $K$ under $\preceq$, the element $m$ belongs to
$\Fringe(K \setminus \{m\})$. Indeed, for any $p \prec m$ strictly, downward closure
of $K$ gives $p \in K$, and $p \neq m$ gives $p \in K \setminus \{m\}$.
Moreover $K \setminus \{m\}$ is itself an order ideal: for any
$q \preceq q'$ with $q' \in K \setminus \{m\}$, downward closure
of $K$ gives $q \in K$, and $q = m$ would require $m \preceq q'$,
i.e.\ $m \prec q'$ (since $q' \neq m$), with $q' \in K$---contradicting
the maximality of $m$ in $K$.
Hence $q \neq m$ and $q \in K \setminus \{m\}$.
Starting from $K$, iteratively remove maximal elements to produce a decreasing chain
terminating at $\emptyset$ in exactly $|K|$ steps; reversing gives the required
increasing sequence.
\end{myproof}

\begin{remark}[What the Learning Space Property Enables]
\label{rem:kst_full}
Part~(ii) establishes that $\K(Q,\preceq)$ is an exploration space, ensuring the
structure is non-trivial. Part~(iii), closure under intersection, makes $(\K,\subseteq)$
a complete lattice (Proposition~\ref{prop:birkhoff}), enabling the memoization
strategy in the DP algorithm: states are lattice elements identifiable via their
set representation without ambiguity. Part~(iv) is the guarantee most directly
relevant to recommendation experience: for any valid state $K$, the system can
always reach $K$ by single-step fringe additions, and the recommendation pipeline
can never reach a valid state from which no single-step progress is possible (unless
the user is at $Q$). An additional consequence is that $(\K,\subseteq)$ is
\emph{graded}: all maximal chains between two comparable states $K \subseteq K'$
have the same length $|K'| - |K|$. This follows from
Proposition~\ref{prop:knowledge_space}(iv): applying the well-graded property
to $K$ gives a chain of length $|K|$ from $\emptyset$ to $K$, and applied to
$K'$ gives a chain of length $|K'|$ from $\emptyset$ to $K'$; every maximal
chain from $K$ to $K'$ therefore has length $|K'|-|K|$, independent of the
path taken. The ``distance'' between exploration states is thus well-defined
and path-independent.
\end{remark}

\subsection{Distributive Lattice Structure and Connection to Formal Concept Analysis}
\label{sec:lattice}

\begin{proposition}[Distributive Lattice]
\label{prop:birkhoff}
The exploration space $(\K(Q,\preceq), \subseteq)$ is a \emph{finite distributive
lattice} with meet $K \wedge K' = K \cap K'$, join $K \vee K' = K \cup K'$,
bottom $\hat{0} = \emptyset$, and top $\hat{1} = Q$.
\end{proposition}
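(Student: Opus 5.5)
The plan is to obtain the lattice structure from the closure properties already established in Proposition~\ref{prop:knowledge_space}, rather than from any abstract machinery. I will treat $(\K(Q,\preceq),\subseteq)$ as a subposet of the Boolean lattice $(2^Q,\subseteq)$. In $2^Q$, the greatest lower bound of two sets is their intersection and the least upper bound is their union.

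First, for $K,K'\in\K$, parts (ii) and (iii) of Proposition~\ref{prop:knowledge_space} give $K\cup K'\in\K$ and $K\cap K'\in\K$. Since $K\cap K'$ is a lower bound of $K$ and $K'$ in $\K$, and it contains every common subset, it is the infimum in $\K$. The same argument shows $K\cup K'$ is the supremum. Hence $\K$ is a lattice with $K\wedge K'=K\cap K'$ and $K\vee K'=K\cup K'$; that is, it is a sublattice of $2^Q$.

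Second, part (i) gives $\emptyset,Q\in\K$. These are comparable to every state by inclusion, so they are the bottom $\hat 0$ and top $\hat 1$. Finiteness holds because $\K\subseteq 2^Q$ and $Q$ is finite, so $|\K|\le 2^{|Q|}$.

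Third, distributivity is inherited from $2^Q$. Because meet and join in $\K$ coincide with $\cap$ and $\cup$, the identity
\[
K\cap(K'\cup K'')=(K\cap K')\cup(K\cap K'')
\]
holds for all $K,K',K''\in\K$, since it holds for arbitrary sets. Both sides are computed with the lattice operations of $\K$, so the distributive law holds in $\K$.

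There is no real obstacle here. The one point that needs care is that the infimum and supremum computed inside $\K$ agree with $\cap$ and $\cup$; this is exactly what closure under binary unions and intersections guarantees. Without that closure, a subposet of $2^Q$ can be a lattice whose operations differ from the set operations, and then distributivity would not transfer. Completeness follows immediately from finiteness, or directly from the arbitrary union and intersection closure in parts (ii) and (iii).
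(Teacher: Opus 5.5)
Your proposal is correct and follows essentially the same route as the paper: closure under binary unions and intersections from Proposition~\ref{prop:knowledge_space}(ii,iii) makes $\cup$ and $\cap$ the join and meet in $\K$, and distributivity is then inherited from the set-theoretic identity. The paper uses the dual law $K \cup (K' \cap K'') = (K \cup K') \cap (K \cup K'')$, which is equivalent in any lattice, and your explicit treatment of $\hat{0}$, $\hat{1}$ and finiteness fills in details the paper leaves implicit.
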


\begin{myproof}
Proposition~\ref{prop:knowledge_space}(ii,iii) establishes that
$K \cup K'$ and $K \cap K' \in \K$ for any $K, K' \in \K$.
$K \cup K'$ is the least upper bound: it contains both $K$ and
$K'$, and any $K'' \in \K$ satisfying $K \subseteq K''$ and
$K' \subseteq K''$ satisfies $K \cup K' \subseteq K''$.
Symmetrically, $K \cap K'$ is the greatest lower bound.
Distributivity follows from the set-theoretic identity
$K \cup (K' \cap K'') = (K \cup K') \cap (K \cup K'')$
applied to the set-theoretic realizations of the lattice
operations.
\end{myproof}

\begin{remark}[Birkhoff's Representation Theorem and Join-Irreducibles]
\label{rem:birkhoff}
Birkhoff's theorem \citep{birkhoff1937rings, davey2002lattices} states that every
finite distributive lattice is isomorphic to the lattice of order ideals of its
poset of join-irreducible elements.

\textbf{Identification of join-irreducibles.}
We claim the join-irreducible elements of $(\K(Q,\preceq), \subseteq)$ are precisely
the principal ideals $\down{q}$ for $q \in Q$.

\textit{$\down{q}$ is an order ideal}: if $p \preceq p'$ and $p' \in \down{q}$,
then $p' \preceq q$ and by transitivity $p \preceq q$, so $p \in \down{q}$.

\textit{$\down{q}$ is join-irreducible}: suppose $\down{q} = I_1 \cup I_2$ with
$I_1, I_2 \in \K$ and $I_1, I_2 \subsetneq \down{q}$ (non-trivial decomposition).
Since $q \in I_1 \cup I_2$, say $q \in I_1$. Then $I_1$ contains $q$ and is an
order ideal, so $\down{q} \subseteq I_1$, giving $I_1 = \down{q}$, contradicting
the hypothesis $I_1 \subsetneq \down{q}$.

Every join-irreducible is a principal ideal: in a lattice of order
ideals ordered by inclusion, a join-irreducible element $K$
must have a unique maximal element $m$. Indeed, if $K$ had two
distinct maximal elements $m_1, m_2$, then
$K = \down{m_1} \cup \down{m_2}$ would be a non-trivial join
decomposition, contradicting join-irreducibility. Hence $K = \down{m}$.

The map $\varphi: q \mapsto \down{q}$ is an order isomorphism from $(Q, \preceq)$
to the join-irreducible poset of $(\K, \subseteq)$: injectivity follows from
antisymmetry, and order preservation from transitivity.

Applying Birkhoff's theorem, $(\K(Q,\preceq), \subseteq)$ is the unique finite
distributive lattice (up to isomorphism) whose join-irreducible poset is $(Q, \preceq)$.

\textbf{Practical consequences.}
(i) \emph{Compactness}: the entire lattice is encoded by the $n$-element partial
order $(Q, \preceq)$, enabling fringe computation without state enumeration.
(ii) \emph{Connection to FCA}: the lattice $(\K, \subseteq)$ is formally identical
in structure to a concept lattice, making the full toolkit of FCA---visualization
software, incremental construction algorithms---directly applicable.
(iii) \emph{Hasse diagram as computational substrate}: covering relations in
$(Q, \preceq)$ suffice for all ESRS computations, as shown in
Proposition~\ref{prop:fringe_complexity}.
\end{remark}

\subsection{The Exploration Fringe}
\label{sec:fringe}

\begin{definition}[Exploration Fringe]
\label{def:fringe}
Given a valid exploration state $K \in \K(Q,\preceq)$, the \emph{exploration fringe} is:
\[
  \Fringe(K) \;=\; \bigl\{q \in Q \setminus K \;\big|\;
  \down{q} \setminus \{q\} \subseteq K\bigr\}.
\]
Equivalently, $\Fringe(K)$ is the set of minimal elements of
$(Q \setminus K, \preceq\!\restriction_{Q\setminus K})$.
\end{definition}

\textbf{Operational interpretation.}
$q \in \Fringe(K)$ if and only if $K \cup \{q\} \in \K$ (proved in
Lemma~\ref{lem:fringe_valid}). The fringe precisely identifies the set of POIs
the system can recommend as next steps without creating an invalid exploration state.

\begin{lemma}[Fringe Transitions Preserve Validity]
\label{lem:fringe_valid}
For any $K \in \K(Q,\preceq)$ and $q \in \Fringe(K)$, we have
$K \cup \{q\} \in \K(Q,\preceq)$.
\end{lemma}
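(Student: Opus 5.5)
The plan is to verify downward closure of $K' = K \cup \{q\}$ directly from Definition~\ref{def:valid_state}, using only that $K$ is an order ideal and the defining condition of Definition~\ref{def:fringe}, namely $\down{q} \setminus \{q\} \subseteq K$. No lattice machinery from Proposition~\ref{prop:birkhoff} is needed; the statement is a pointwise check.

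Fix $p, p' \in Q$ with $p \preceq p'$ and $p' \in K'$; we must show $p \in K'$. I will split on where $p'$ lies. \emph{Case 1:} $p' \in K$. Since $K \in \K(Q,\preceq)$ is downward closed, $p \in K \subseteq K'$. \emph{Case 2:} $p' = q$. Then $p \preceq q$, i.e.\ $p \in \down{q}$. Either $p = q$, in which case $p \in K'$ trivially, or $p \neq q$, in which case $p \in \down{q} \setminus \{q\} \subseteq K \subseteq K'$ by the fringe condition. These two cases exhaust $p' \in K'$, so $K'$ is an order ideal and hence $K' \in \K(Q,\preceq)$.

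There is no substantive obstacle. The only point needing care is to use the set-difference formulation of the fringe in Definition~\ref{def:fringe}, rather than the ``minimal elements of $Q\setminus K$'' phrasing, so that reflexivity ($q \preceq q$) is handled by the $p = q$ subcase instead of being mistaken for a prerequisite outside $K$.

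If desired, I would add the converse to justify the ``if and only if'' in the operational interpretation preceding the lemma. Suppose $q \notin K$ and $K \cup \{q\} \in \K$. Then any $p \prec q$ satisfies $p \in K \cup \{q\}$ by downward closure, and $p \neq q$ by antisymmetry, so $p \in K$. Hence $q \in \Fringe(K)$.
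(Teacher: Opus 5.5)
Your proof is correct and is essentially identical to the paper's: both fix $p \preceq p'$ with $p' \in K \cup \{q\}$, dispatch $p' \in K$ by downward closure of $K$, and handle $p' = q$ by splitting on $p = q$ versus $p \in \down{q}\setminus\{q\} \subseteq K$. Your added converse is a correct bonus; there, $p \neq q$ follows directly from the definition of the strict part $\prec$, so antisymmetry is not needed.
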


\begin{myproof}
Let $K' = K \cup \{q\}$. Let $p \preceq q'$ with $q' \in K'$. If $q' \in K$, then
$p \in K \subseteq K'$ by downward closure of $K$. If $q' = q$: if $p = q$ then
$p \in K'$; if $p \neq q$ then $p \prec q$ strictly, so $p \in \down{q} \setminus
\{q\} \subseteq K \subseteq K'$ by the fringe condition.
\end{myproof}

\begin{proposition}[Efficient Fringe Computation]
\label{prop:fringe_complexity}
Let $H = (Q, E_H)$ be the Hasse diagram of $\preceq$ and assume $O(1)$ membership
testing for $K$. Given any valid state $K$, the fringe $\Fringe(K)$ can be computed
in $O(n + |E_H|)$ time. When $K$ is updated by adding a single item $q^*$, the
fringe can be updated incrementally in $O(\mathrm{deg}_H^+(q^*))$ time.
\end{proposition}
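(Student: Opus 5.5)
The plan is to reduce the fringe test to a local condition on the Hasse diagram and then implement it with in-degree counters. First I will show that for a valid state $K$ and $q \notin K$,
\[
  q \in \Fringe(K) \iff \text{every Hasse predecessor } p \lessdot q \text{ lies in } K .
\]
The forward direction is immediate, since $p \lessdot q$ implies $p \in \down{q}\setminus\{q\}$. For the converse, take any $p \prec q$. Because $Q$ is finite, there is a maximal chain $p = p_0 \lessdot p_1 \lessdot \cdots \lessdot p_k = q$, and the last link $p_{k-1}$ is a Hasse predecessor of $q$, so $p_{k-1} \in K$. Downward closure of $K$ (Definition~\ref{def:valid_state}) then gives $p \in K$. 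Hence the condition $\down{q}\setminus\{q\} \subseteq K$ of Definition~\ref{def:fringe} is equivalent to the local covering condition.

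\textbf{Batch computation.} For each $q \notin K$, I scan its incoming Hasse edges and test membership in $K$ for each one. Summed over all $q$, this costs $O(n + |E_H|)$ under the $O(1)$ membership assumption, and the local criterion shows the output is exactly $\Fringe(K)$. In the same pass I record, for every $q$, the counter $c(q) = |\{p \lessdot q : p \notin K\}|$. The fringe is then $\{q \notin K : c(q) = 0\}$, and I store it as a hash set or an indexed array so that insertion and deletion take $O(1)$.

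\textbf{Incremental update.} The update is for adding $q^* \in \Fringe(K)$, which gives a valid state by Lemma~\ref{lem:fringe_valid}. I remove $q^*$ from the fringe in $O(1)$. Then, for each out-neighbour $r$ of $q^*$ in $H$, I decrement $c(r)$ and insert $r$ into the fringe when $c(r)$ reaches $0$.

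Correctness of the update rests on one observation: the only counters that change are those of the immediate successors of $q^*$. An item $r$ that is not a successor has the same uncovered predecessors before and after the step. Also, no other fringe item can leave the fringe, because $K$ only grows. The total work is $O(\deg^+_H(q^*))$.

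\textbf{Main obstacle.} The subtle point is the bookkeeping assumption. The $O(\deg^+_H(q^*))$ bound needs the counters $c(\cdot)$ to be maintained as state alongside $K$, initialised once in the batch pass. Without them, re-checking each successor $r$ would cost $\deg^-_H(r)$. I will state explicitly that the counters persist across updates. The local criterion also relies on finiteness of $Q$ to produce covering chains, and this should be noted.
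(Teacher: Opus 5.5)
Your proposal is correct and follows essentially the same route as the paper. Both reduce the fringe condition to the lower covers in $H$ using finiteness and downward closure, compute a counter of non-$K$ lower covers in one $O(n+|E_H|)$ pass, and, when $q^*$ is added, decrement the counters of its upper covers, with the counters explicitly kept across updates. Your restriction to $q^* \in \Fringe(K)$ only makes the paper's extra check $s \notin K \cup \{q^*\}$ redundant, since no upper cover of a fringe item can already lie in the ideal $K$.
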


\begin{myproof}
\textbf{Batch computation.} The fringe condition $q \in \Fringe(K)$ requires all
lower covers of $q$ in $H$ to be in $K$ (checking lower covers suffices because in a finite poset every
predecessor of $q$ lies below some lower cover of $q$, and
downward closure then propagates membership through transitivity). 
Algorithm: (1) initialize
counter array $\mathrm{cnt}[q] \leftarrow 0$ for all $q \in Q \setminus K$
($O(n)$); (2) for each $(p, q) \in E_H$ with $p, q \notin K$, increment $\mathrm{cnt}[q]$
($O(|E_H|)$); (3) return $\{q \in Q \setminus K : \mathrm{cnt}[q] = 0\}$ ($O(n)$).

\textbf{Incremental update.} Assuming the counter array $\mathrm{cnt}$ is maintained across
state updates, when $q^*$ is added to $K$: for each successor $s$
with $q^* \lessdot s$, decrement $\mathrm{cnt}[s]$; if $\mathrm{cnt}[s]$ reaches
$0$ and $s \notin K \cup \{q^*\}$, add $s$ to the fringe. Remove $q^*$ from the
fringe. Cost: $O(\mathrm{deg}_H^+(q^*))$.
\end{myproof}

\begin{figure}[htb]
\centering
\begin{tikzpicture}[
  node distance=1.6cm and 2.0cm,
  item/.style={draw=ForestGreen!70!black, thick, rounded corners,
               minimum width=2.4cm, minimum height=0.6cm,
               font=\small, fill=ForestGreen!8},
  state/.style={draw=MidnightBlue, thick, rounded corners,
                minimum width=2.6cm, minimum height=0.55cm,
                font=\scriptsize, fill=MidnightBlue!5},
  arrow/.style={-Stealth, thick},
  sarrow/.style={-Stealth, MidnightBlue, thick}
]

\node[item] (q1) at (0,0)    {$q_1$: City Museum};
\node[item] (q2) at (3.5,0)  {$q_2$: Latin Quarter}; 
\node[item] (q4) at (0,1.7)  {$q_4$: Art Gallery};
\node[item] (q3) at (3.5,1.7){$q_3$: Med.\ Library};
\node[item] (q5) at (0,3.4)  {$q_5$: Rooftop Bar};

\draw[arrow] (q1)--(q4) node[midway,left,font=\scriptsize]{$\prec$};
\draw[arrow] (q4)--(q5) node[midway,left,font=\scriptsize]{$\prec$};
\draw[arrow] (q2)--(q3) node[midway,right,font=\scriptsize]{$\prec$};

\node[draw=none,fill=none,font=\small\itshape] at (1.75,-1)
  {(a) Hasse diagram of $(Q, \preceq)$};


\begin{scope}[xshift=8.5cm] 
\node[state] (bot)   at (1.5,0)   {$\emptyset$};
\node[state] (s1)    at (-0.2,1.5) {$\{q_1\}$};     
\node[state] (s2)    at (3.2,1.5)  {$\{q_2\}$};     

\node[state] (s14)   at (-1.8,3)  {$\{q_1,q_4\}$};   
\node[state] (s12)   at (1.5,3)   {$\{q_1,q_2\}$};  
\node[state] (s23)   at (4.8,3)   {$\{q_2,q_3\}$};  

\node[state] (sdots) at (1.5,4.5) {$\dots$ (12 states total)};
\node[state] (top)   at (1.5,6)   {$Q = \{q_1,q_2,q_3,q_4,q_5\}$};

\draw[sarrow](bot)--(s1.south);  \draw[sarrow](bot)--(s2.south);
\draw[sarrow](s1.north)--(s12.south);  \draw[sarrow](s1.north)--(s14.south);
\draw[sarrow](s2.north)--(s12.south);  \draw[sarrow](s2.north)--(s23.south);

\draw[sarrow](s12.north)--(sdots.south); 
\draw[sarrow](s14.north)--(sdots.south);
\draw[sarrow](s23.north)--(sdots.south);
\draw[sarrow](sdots.north)--(top.south);

\node[draw=none,fill=none,font=\small\itshape] at (1.5,-1)
  {(b) Lattice $(\mathcal{K}, \subseteq), |\mathcal{K}|=12$};
\end{scope}

\end{tikzpicture}
\caption{(a) Hasse diagram of the surmise relation on
$Q = \{q_1,\ldots,q_5\}$ with covering relations $q_1 \lessdot q_4 \lessdot q_5$
and $q_2 \lessdot q_3$. Items $q_1$ and $q_2$ have no predecessors and form
$\Fringe(\emptyset)$.
(b) The resulting distributive lattice of valid exploration states. By
Remark~\ref{rem:birkhoff}, the join-irreducible elements are the five principal
ideals $\down{q_1} = \{q_1\}$, $\down{q_2} = \{q_2\}$, $\down{q_3} = \{q_2,q_3\}$,
$\down{q_4} = \{q_1,q_4\}$, $\down{q_5} = \{q_1,q_4,q_5\}$. The 12 valid states result from multiplicativity of ideal counts:
since the poset is the disjoint union of two independent chains,
its ideal lattice is the direct product of their ideal lattices.
The chain $q_1 \prec q_4 \prec q_5$
generates 4 ideals and $q_2 \prec q_3$ generates 3, giving $4 \times 3 = 12$.}
\label{fig:hasse}
\end{figure}
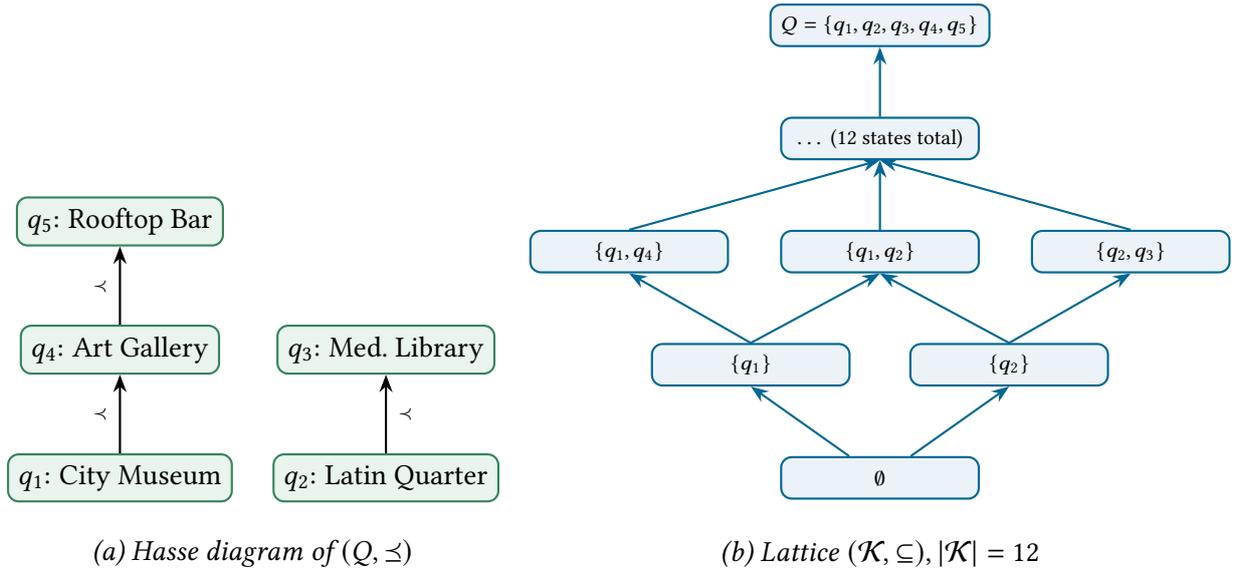

\subsection{State Transitions and Path Validity}
\label{sec:path_validity}

\begin{corollary}[Path Validity]
\label{cor:path_valid}
Let $K_0 \in \K(Q,\preceq)$ and let $\pi = (q_1, \ldots, q_k)$ satisfy, for each $j$:
$q_j \in \Fringe(K_{j-1})$ and $K_j = K_{j-1} \cup \{q_j\}$.
Then $K_j \in \K(Q,\preceq)$ for all $j$, and the items $q_1, \ldots, q_k$ are
pairwise distinct.
\end{corollary}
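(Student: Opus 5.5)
The plan is to argue by induction on $j$, using Lemma~\ref{lem:fringe_valid} as the only real ingredient for validity. The base case $j=0$ is the hypothesis $K_0 \in \K(Q,\preceq)$. For the inductive step, suppose $K_{j-1} \in \K(Q,\preceq)$. Then $\Fringe(K_{j-1})$ is well defined, since Definition~\ref{def:fringe} is stated for valid states. By hypothesis $q_j \in \Fringe(K_{j-1})$, so Lemma~\ref{lem:fringe_valid} gives $K_j = K_{j-1} \cup \{q_j\} \in \K(Q,\preceq)$. This settles the first claim for all $j = 0,1,\ldots,k$.

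For pairwise distinctness, I would first record two facts. From the definition of the fringe, $q_j \in Q \setminus K_{j-1}$, that is, $q_j \notin K_{j-1}$. The chain $K_0 \subseteq K_1 \subseteq \cdots \subseteq K_k$ is monotone, and $q_i \in K_i$ for every $i$. Now take $i < j$. Then $q_i \in K_i \subseteq K_{j-1}$, while $q_j \notin K_{j-1}$, so $q_i \neq q_j$. As a by-product, each inclusion $K_{j-1} \subset K_j$ is strict and $|K_j| = |K_0| + j$, which matches the gradedness in Remark~\ref{rem:kst_full}.

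I do not expect a genuine obstacle; this is a direct corollary. The only point needing care is the ordering of the argument. Validity of $K_{j-1}$ must be established before invoking the fringe hypothesis at step $j$, because both Definition~\ref{def:fringe} and Lemma~\ref{lem:fringe_valid} presuppose a valid state. For this reason I would carry the validity induction through completely first and only afterwards derive distinctness from the condition $q_j \notin K_{j-1}$.
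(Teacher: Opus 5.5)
Your proof is correct and follows the paper's argument: validity by induction on $j$ using Lemma~\ref{lem:fringe_valid}, and distinctness from $q_j \notin K_{j-1} \supseteq \{q_1,\ldots,q_{j-1}\}$. You also establish the validity of $K_{j-1}$ before invoking the fringe hypothesis at step $j$, which the paper leaves implicit.
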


\begin{myproof}
Validity: by induction, Lemma~\ref{lem:fringe_valid} gives $K_j \in \K$ from
$K_{j-1} \in \K$ and $q_j \in \Fringe(K_{j-1})$. Distinctness: $q_j \in Q \setminus
K_{j-1}$ implies $q_j \notin \{q_1, \ldots, q_{j-1}\} \subseteq K_{j-1}$.
\end{myproof}

Corollary~\ref{cor:path_valid} is the safety certificate of the ESRS recommendation
algorithm. Any path generated by sequential fringe selection (i) respects the
prerequisite structure by construction, without any explicit constraint-checking step,
and (ii) never recommends the same POI twice.

A further structural consequence---sub-path optimality of the dynamic program over
$(\K, \subseteq)$---is established in Proposition~\ref{prop:subpath} once the
Bellman formulation is introduced (\S\ref{sec:dp}).

\section{System Architecture}
\label{sec:arch}

The Exploration Space Recommender System (ESRS) comprises three interconnected
components (Figure~\ref{fig:architecture}): the \textbf{Location Database}, the
\textbf{User Modeling Component}, and the \textbf{Recommendation Engine}. These
interact through the Exploration Space $(Q, \K, \preceq)$.

\begin{figure}[htb]
\centering
\begin{tikzpicture}[
  >=Stealth,
  node distance=1.4cm and 0.8cm,
  every node/.style={align=center, font=\small},
  box/.style={
    draw=MidnightBlue!70, fill=MidnightBlue!3, thick, rounded corners=2pt,
    minimum width=4.5cm, minimum height=1.3cm, text width=4.3cm
  },
  cbox/.style={
    draw=ForestGreen!70, fill=ForestGreen!3, thick, rounded corners=2pt,
    minimum width=4.5cm, minimum height=1.3cm, text width=4.3cm
  },
  arrow/.style={->, thick, MidnightBlue!80},
  darrow/.style={->, thick, ForestGreen!80},
  labelstyle/.style={font=\footnotesize, text=black!80}
]

\node[cbox] (es) {\textbf{Exploration Space $(Q, \mathcal{K}, \preceq)$} \\ \footnotesize Fringe, lattice, transitions};
\node[box]  (ra) [above=of es] {\textbf{Recommendation Engine} \\ \footnotesize Assessment $\to$ Fringe $\to$ DP / Rank};
\node[cbox] (ui) [above=of ra] {\textbf{User Interface \& Feedback} \\ \footnotesize Paths, rankings, check-ins};

\node[box]  (db) [below left=1.2cm and -0.8cm of es] {\textbf{Location Database} \\ \footnotesize POIs, attributes, surmise $\preceq$};
\node[box]  (um) [below right=1.2cm and -0.8cm of es] {\textbf{User Modeling} \\ \footnotesize $E_u(t)$, $P_u(K)$, $\mathbf{p}_u(t)$};


\draw[arrow] (db.north) -- ([xshift=-1.2cm]es.south) 
      node[labelstyle, left, midway, xshift=-0.1cm] {\textit{POI data}};
\draw[arrow] (um.north) -- ([xshift=1.2cm]es.south) 
      node[labelstyle, right, midway, xshift=0.1cm] {\textit{user state}};

\draw[arrow] (es) -- (ra) node[labelstyle, right, midway] {\textit{structure}};
\draw[arrow] (ra) -- (ui) node[labelstyle, right, midway] {\textit{recommendations}};

\draw[arrow] (db.west) -- ++(-0.6,0) |- (ra.west) 
      node[labelstyle, pos=0.25, sloped, above] {\textit{attributes}};

\draw[darrow] (ui.east) .. controls +(2.8,-1.5) and +(2.8,1.5) .. (um.east)
      node[labelstyle, midway, right, xshift=0.2cm] {\textit{feedback loop}};

\end{tikzpicture}
\caption{High-level architecture of ESRS. The Exploration Space $(Q, \K, \preceq)$ serves as the shared structural backbone, populated from the Location Database and navigated by the Recommendation Engine in light of the User Modeling Component.} \label{fig:architecture}
\end{figure}
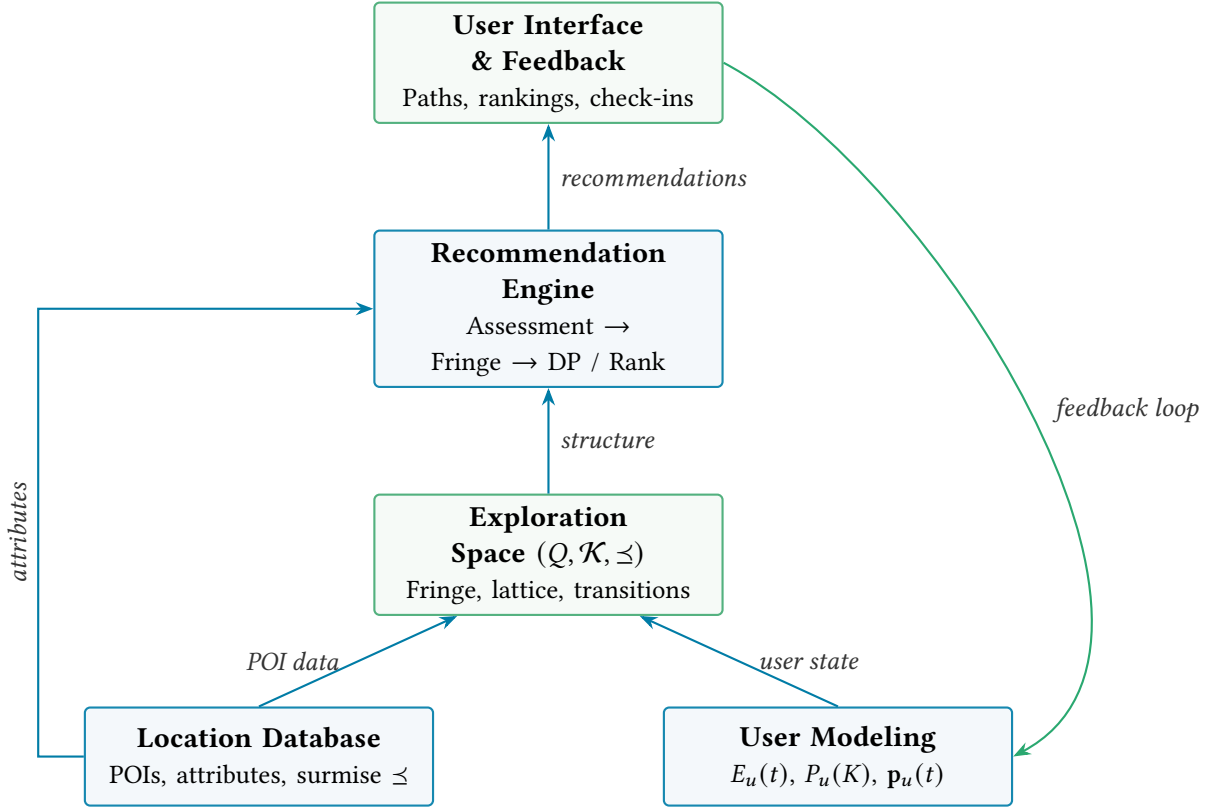

\subsection{Location Database}
\label{sec:locdb}

The surmise relation $\preceq$ is treated as a first-class citizen of the data model,
stored explicitly as directed edges alongside standard POI attributes. Each POI
$L_i \in Q$ is characterized by: unique identifier and coordinates $(x_i, y_i)$;
hierarchical category $\mathrm{Cat}_i$; normalized popularity $\mathrm{Pop}_i \in [0,1]$
and review score $\mathrm{Rev}_i \in [0,1]$; static context (opening hours, admission
policy); dynamic context (event feeds); and surmise links $(L_i \prec L_j)$ derived
from Algorithm~\ref{alg:prefixspan} subject to expert validation.

For new POIs, we define a soft neighborhood function:
\[
N_f(L_i) = \{L_j \in Q : \mathrm{sim}_{\mathrm{cat}}(L_i, L_j) > \theta_1 \;\text{or}\; \mathrm{dist}(L_i, L_j) < \theta_2\},
\]
where $\mathrm{sim}_{\mathrm{cat}}$ measures category overlap (e.g., Jaccard on category
hierarchy) and $\mathrm{dist}$ is geographic distance. This neighborhood is used for
initial placement and for identifying candidate POIs for potential surmise relations.

A graph database backend (e.g., Neo4j) is appropriate for efficient DAG traversal,
integrated with a geospatial layer (e.g., PostGIS) for spatial queries.

\subsection{User Modeling Component}
\label{sec:usermodel}

The User Modeling Component maintains three inter-related objects.

\paragraph*{Confirmed exploration state $E_u(t)$}
The set of POIs user $u$ has been confirmed to have meaningfully visited by time $t$,
operationalized via high-confidence engagement signals (deliberate check-in, dwell
time above threshold $\theta_d^+$, explicit rating). A strict invariant is maintained:
$E_u(t)$ is updated \emph{exclusively} upon reception of a high-confidence signal
and never overwritten by a probabilistic estimate. This preserves the structural
integrity of the exploration space; since only fringe items are ever recommended
(Algorithm~\ref{alg:esrs}) and Corollary~\ref{cor:path_valid} ensures fringe-guided
transitions remain within $\K$, the confirmed state can never violate the surmise
relation.

\paragraph*{Probabilistic state distribution $P_u$ and working state $\hat{E}_u$}
Because users visit POIs without registering check-ins, ESRS maintains a probability
distribution $P_u : \K \to [0,1]$ inspired by BLIM \citep{falmagne1988stochastic}.
Upon observing an interaction pattern $\mathbf{r}$, the distribution is updated by
Bayes' rule:
\[
  P_u(K \mid \mathbf{r}) \;\propto\; P(\mathbf{r} \mid K) \cdot P_u(K),
\]
where the BLIM likelihood factorizes as:
\[
  P(\mathbf{r} \mid K) = \prod_{i:\, r_i = 1}
    \bigl[(1 - \beta_i)^{\mathbb{1}[i \in K]}\,
           \beta_i^{\mathbb{1}[i \notin K]}\bigr]
    \cdot
    \prod_{i:\, r_i = 0}
    \bigl[\eta_i^{\mathbb{1}[i \in K]}\,
          (1 - \eta_i)^{\mathbb{1}[i \notin K]}\bigr],
\]
with false-positive rate $\beta_i$ (the probability of a positive signal without
a meaningful visit) and false-negative rate $\eta_i$ (the probability of no digital
signal despite a meaningful visit). The MAP estimate
$\hat{E}_u = \argmax_{K \in \K} P_u(K \mid \mathbf{r})$ is the \emph{working state}:
used for fringe computation and path optimization, discarded at end of each cycle.

When $|\K|$ is too large for exact enumeration, \emph{beam-search estimation}
retains only the $B$ most probable states and renormalizes. For $c$ independent
components with chain heights $h_1,\ldots,h_c$, we have $|\K| = \prod_{i=1}^c (h_i+1)$;
for example, with $c=5$ components each of height $h_i=3$,
$|\K| \leq 4^5 = 1024$, which is fully tractable. Setting $B \in \{100, 500\}$
achieves near-optimal MAP estimation in practice \citep{desmarais2012real}.

\paragraph*{Preference vector $\mathbf{p}_u(t)$}
Individual affinities for POIs, updated via exponential moving average:
\[
  p_{u,i}(t+1) = p_{u,i}(t) + \ell_r \cdot
  \bigl(I_{\mathrm{obs}} - p_{u,i}(t)\bigr), \qquad \ell_r \in (0,1],
\]
where $I_{\mathrm{obs}} \in [0,1]$ encodes normalized dwell time or rating.
For POIs not yet encountered, a latent factor model trained on the interaction matrix
$R \approx U \cdot V^T$ provides the prior.

\subsection{The Unified Interest Score}
\label{sec:interest}

\begin{equation}
\label{eq:interest}
\I(u, L_i, \hat{E}_u) \;=\;
w_{\alpha} \cdot \mathrm{Pref}_{u,i}
+ w_{\beta} \cdot \mathrm{Prop}_i
+ w_{\gamma} \cdot \mathrm{Collab}_{u,i}
+ w_{\delta} \cdot \mathrm{Rel}(L_i, \hat{E}_u),
\end{equation}
with $w_{\alpha}+w_{\beta}+w_{\gamma}+w_{\delta}=1$, all coefficients non-negative, and all
component scores in $[0,1]$.

\paragraph*{User preference $\mathrm{Pref}_{u,i}$}
A weighted combination of explicit feedback (ratings, bookmarks) and implicit
behavioral proxies (dwell time, click-through, check-in frequency), with matrix
factorization estimate $\hat{R}_{ui}$ as prior for unvisited POIs.

\paragraph*{Location properties $\mathrm{Prop}_i = w_c C_i + w_r \mathrm{Pop}_i + w_s \mathrm{Rev}_i$}
where $w_c + w_r + w_s = 1$ with all coefficients non-negative, ensuring
$\mathrm{Prop}_i \in [0,1]$.
Category relevance $C_i \in [0,1]$ measures the overlap between $\mathrm{Cat}_i$
and the user's inferred category preferences (e.g., Jaccard similarity between
$\mathrm{Cat}_i$ and $\{\mathrm{Cat}_j : L_j \in \hat{E}_u\}$, with a
content-based prior for new users).
Normalized popularity $\mathrm{Pop}_i$ and review score $\mathrm{Rev}_i$
are defined in \S\ref{sec:locdb}.
This component is particularly important in cold-start regimes where no
preference data is available.

\paragraph*{Collaborative signal $\mathrm{Collab}_{u,i}$}
User similarity measured via regularized Jaccard:
\[
\mathrm{Sim}_{u,v} = \frac{|E_u \cap E_v| + \varepsilon}{|E_u \cup E_v| + \varepsilon},
\]
where $\varepsilon > 0$ (typically $\varepsilon = 1$) ensures stability; for
$E_u = E_v = \emptyset$, $\mathrm{Sim}_{u,v} = 1$. For new users, a hybrid similarity
combines this with preference-vector cosine:
\[
\mathrm{Sim}^{\mathrm{hybrid}}_{u,v} = \lambda_s \mathrm{Sim}_{u,v} + (1-\lambda_s)
\frac{\mathbf{p}_u \cdot \mathbf{p}_v}{\|\mathbf{p}_u\|\|\mathbf{p}_v\|},
\]
where $\lambda_s : \mathbb{N} \to [0,1]$ satisfies $\lambda_s(0) = 0$ and
$\lambda_s(t) \to 1$ as $|E_u(t)|$ grows
(e.g., $\lambda_s(t) = 1 - e^{-|E_u(t)|/\kappa}$ for $\kappa > 0$),
ensuring $\mathrm{Sim}^{\mathrm{hybrid}}_{u,v}$ reduces to cosine similarity
at system bootstrap. The collaborative signal is then:
\[
\mathrm{Collab}_{u,i} = \frac{\sum_{v \in \mathcal{U}_u} \mathrm{Sim}^{\mathrm{hybrid}}_{u,v} \cdot \mathrm{Pref}_{v,i}}{\sum_{v \in \mathcal{U}_u} \mathrm{Sim}^{\mathrm{hybrid}}_{u,v}}.
\]

\begin{proposition}[Cold-Start Robustness of the Collaborative Signal]
\label{prop:collab_coldstart}
For any user $u$ with $E_u(t_0) = \emptyset$ and
$\mathbf{p}_u(t_0) \neq \mathbf{0}$, the collaborative signal
$\mathrm{Collab}_{u,i}$ is well-defined. It is strictly positive
for any POI $L_i$ with at least one neighbor
$v^* \in \mathcal{U}_u$ satisfying $\mathrm{Pref}_{v^*,i} > 0$
and $\mathbf{p}_{v^*} \neq \mathbf{0}$.
\end{proposition}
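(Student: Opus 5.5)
The argument reduces the hybrid similarity to pure cosine similarity at bootstrap and then controls the sign of every term in the ratio defining $\mathrm{Collab}_{u,i}$. First, since $E_u(t_0)=\emptyset$, the mixing weight satisfies $\lambda_s(t_0)=1-e^{-0/\kappa}=0$, or more generally $\lambda_s(0)=0$ by the stated requirement. Hence, for every neighbour $v\in\mathcal{U}_u$,
\[
\mathrm{Sim}^{\mathrm{hybrid}}_{u,v}=\frac{\mathbf{p}_u\cdot\mathbf{p}_v}{\|\mathbf{p}_u\|\,\|\mathbf{p}_v\|}.
\]
The regularized Jaccard term $\mathrm{Sim}_{u,v}$, which could be degenerate, drops out entirely. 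Note that it is well-defined anyway, because $\varepsilon>0$ keeps its denominator positive.

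Next, I would establish non-negativity and well-definedness of each summand. The hypothesis $\mathbf{p}_u\neq\mathbf{0}$ gives $\|\mathbf{p}_u\|>0$. For neighbours with $\mathbf{p}_v\neq\mathbf{0}$ the cosine is therefore defined. Because preference entries are produced by the exponential moving average with $I_{\mathrm{obs}}\in[0,1]$ (or by a prior clipped to $[0,1]$), all entries are non-negative, so the cosine lies in $[0,1]$. For neighbours with $\mathbf{p}_v=\mathbf{0}$, I would adopt the convention that the similarity equals $0$; this is the limit consistent with the numerator vanishing. Such neighbours then contribute nothing to either sum. Since $\mathrm{Pref}_{v,i}\in[0,1]$ as well, every summand in both the numerator and the denominator of $\mathrm{Collab}_{u,i}$ is non-negative.

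The main obstacle is the denominator $\sum_{v}\mathrm{Sim}^{\mathrm{hybrid}}_{u,v}$. It is non-negative, but strict positivity is needed for both well-definedness and strict positivity of the ratio. It is positive exactly when some neighbour's preference vector has support overlapping that of $\mathbf{p}_u$. For the distinguished $v^*$, the condition $\mathbf{p}_{v^*}\neq\mathbf{0}$ alone does not force $\mathbf{p}_u\cdot\mathbf{p}_{v^*}>0$, since two non-negative vectors with disjoint supports have cosine $0$. I would therefore make explicit the standing convention that the neighbourhood $\mathcal{U}_u$ is selected as the users of positive similarity to $u$ (the usual top-$k$ or thresholded construction). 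With that convention, $\mathrm{Sim}^{\mathrm{hybrid}}_{u,v^*}>0$.

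Under this convention the conclusion follows directly. The denominator is at least $\mathrm{Sim}^{\mathrm{hybrid}}_{u,v^*}>0$, so $\mathrm{Collab}_{u,i}$ is well-defined. The numerator is at least $\mathrm{Sim}^{\mathrm{hybrid}}_{u,v^*}\cdot\mathrm{Pref}_{v^*,i}>0$, since all other terms are non-negative. Hence $\mathrm{Collab}_{u,i}>0$. As a by-product, $\mathrm{Collab}_{u,i}$ is a convex combination of values in $[0,1]$ and so itself lies in $[0,1]$, consistent with the requirement in \eqref{eq:interest}.
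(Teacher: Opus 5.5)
Your argument has the same skeleton as the paper's proof. $\lambda_s(0)=0$ reduces the hybrid similarity to the cosine, every summand is non-negative, and the numerator and denominator are both bounded below by the $v^*$ term. You depart from it at exactly the step you flagged, and there you are right and the paper is not. Its proof asserts that the cosine ``is strictly positive whenever $\mathbf{p}_u\neq\mathbf{0}$ and $\mathbf{p}_v\neq\mathbf{0}$,'' and that the denominator is positive whenever $|\mathcal{U}_u|\geq 1$. Both claims fail when preference vectors have disjoint supports. The second also ignores neighbours with $\mathbf{p}_v=\mathbf{0}$, which you handle by convention. Take two POIs with $\mathbf{p}_u=(1,0)$, $\mathbf{p}_{v^*}=(0,1)$, $\mathcal{U}_u=\{v^*\}$ and $\mathrm{Pref}_{v^*,2}=1$. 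Every hypothesis holds, yet the denominator is $0$. So the proposition as literally stated is false, and an extra assumption such as yours cannot be avoided.

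Three refinements:
\begin{itemize}
\item \emph{Threshold, not top-$k$.} State your convention as a positive similarity threshold rather than ``top-$k$.'' A top-$k$ rule can admit zero-similarity users when fewer than $k$ users overlap with $u$.
\item \emph{Alternative fixes.} If you prefer not to constrain how $\mathcal{U}_u$ is built, the minimal fix for the positivity claim is to replace $\mathbf{p}_{v^*}\neq\mathbf{0}$ by $\mathbf{p}_u\cdot\mathbf{p}_{v^*}>0$. Alternatively, assume $\mathbf{p}_u$ has full support, e.g.\ a strictly positive prior on unencountered POIs; this makes the paper's cosine step true as written.
\item \emph{Well-definedness.} The proposition states well-definedness without the $v^*$ hypothesis, but you derive it from $v^*$. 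The denominator does not depend on $i$, so under your convention it is positive exactly when $\mathcal{U}_u\neq\emptyset$. Say so explicitly; the paper also tacitly assumes $|\mathcal{U}_u|\geq 1$.
\end{itemize}
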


\begin{myproof}
With $E_u(t_0) = \emptyset$, we have $|E_u(t_0)| = 0$, so by
definition $\lambda_s(0) = 0$ and
$\mathrm{Sim}^{\mathrm{hybrid}}_{u,v} =
  \frac{\mathbf{p}_u \cdot \mathbf{p}_v}
       {\|\mathbf{p}_u\|\,\|\mathbf{p}_v\|}$,
the cosine similarity.  Since preference scores lie in $[0,1]$,
the cosine is non-negative; it is strictly positive whenever
$\mathbf{p}_u \neq \mathbf{0}$ and $\mathbf{p}_v \neq \mathbf{0}$.
The denominator
$\sum_{v \in \mathcal{U}_u} \mathrm{Sim}^{\mathrm{hybrid}}_{u,v}$
is strictly positive whenever $|\mathcal{U}_u| \geq 1$ and
$\mathbf{p}_u \neq \mathbf{0}$, so $\mathrm{Collab}_{u,i}$
is well-defined.
\textit{Non-zero:} by hypothesis there exists $v^*\in\mathcal{U}_u$
with $\mathrm{Pref}_{v^*,i}>0$.  If $\mathbf{p}_{v^*}\neq\mathbf{0}$
and $\mathbf{p}_u\neq\mathbf{0}$, then
$\mathrm{Sim}^{\mathrm{hybrid}}_{u,v^*}>0$, and the numerator
$\sum_v \mathrm{Sim}^{\mathrm{hybrid}}_{u,v}\cdot\mathrm{Pref}_{v,i}
\geq \mathrm{Sim}^{\mathrm{hybrid}}_{u,v^*}\cdot
\mathrm{Pref}_{v^*,i} > 0$.
\end{myproof}

\paragraph*{State-relative structural accessibility $\mathrm{Rel}(L_i, \hat{E}_u)$}
\begin{equation}
\label{eq:rel}
\mathrm{Rel}(L_i, \hat{E}_u) \;=\;
\frac{|\down{L_i} \cap \hat{E}_u|}{|\down{L_i}|}.
\end{equation}
This equals 1 if and only if $L_i \in \hat{E}_u$: since $\hat{E}_u$ is an order
ideal, $L_i \in \hat{E}_u$ implies $\down{L_i} \subseteq \hat{E}_u$, giving
$|\down{L_i} \cap \hat{E}_u| = |\down{L_i}|$; conversely, since $L_i \in \down{L_i}$ by reflexivity of $\preceq$, $L_i \notin \hat{E}_u$ implies $L_i \notin \down{L_i} \cap \hat{E}_u$,
so $|\down{L_i} \cap \hat{E}_u| \leq |\down{L_i}| - 1 < |\down{L_i}|$.
For a fringe item $L_i \in \Fringe(\hat{E}_u)$, we have
$\down{L_i} \setminus \{L_i\} \subseteq \hat{E}_u$ and $L_i \notin \hat{E}_u$,
so $\mathrm{Rel}(L_i, \hat{E}_u) = (|\down{L_i}|-1)/|\down{L_i}|$: the score is
strictly less than 1 but approaches 1 as the depth of $L_i$ grows. For
prerequisite-free fringe items ($|\down{L_i}|=1$, i.e., $\down{L_i}=\{L_i\}$),
the score is 0 regardless of $\hat{E}_u$---they are structurally accessible but
have no visited prerequisites to reference. The score provides a graded measure
of proximity to accessibility for non-fringe items; for fringe items it encodes
prerequisite depth, with deeper items receiving a higher score.

The Recommendation Engine (Algorithm~\ref{alg:esrs}) restricts
ranking and path optimization to $F = \Fringe(\hat{E}_u)$.
Within the fringe, $\mathrm{Rel}$ contributes to
$\mathcal{I}$ via the $w_{\delta}$ term: it assigns higher
scores to items whose prerequisite chains are deeper, providing
a gradient of structural proximity even among accessible items.
For prerequisite-free fringe items ($|\down{L_i}|=1$), this
term is uniformly zero; the other three components of
$\mathcal{I}$ carry the full discriminative weight.

\subsection{Cold-Start Strategies}
\label{sec:coldstart}

\subsubsection{New Users}

\textbf{Strategy~1---Structural onboarding.}
$\Fringe(\emptyset) = \{q \in Q : \nexists\, p \in Q \text{ with } p \prec q\}$
comprises POIs with no prerequisites, unconditionally accessible to any user.

\begin{proposition}[Structural Cold-Start Guarantee]
\label{prop:coldstart_guarantee}
For any surmise relation $(Q, \preceq)$ and any user $u$ with $E_u(t_0) = \emptyset$,
the set $\Fringe(\emptyset)$ is non-empty, and $\{q\} \in \K(Q, \preceq)$ for all
$q \in \Fringe(\emptyset)$.
\end{proposition}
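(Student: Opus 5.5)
The plan is to prove the two claims separately, using only finiteness of $Q$ and the definitions of the fringe and of valid states. First I will note that by Definition~\ref{def:fringe}, with $K=\emptyset$, we have $\Fringe(\emptyset)=\{q\in Q : \down{q}\setminus\{q\}=\emptyset\}$. That is, $\Fringe(\emptyset)$ is exactly the set of minimal elements of $(Q,\preceq)$, which agrees with the description given in Strategy~1.

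\textbf{Non-emptiness.} Since $Q$ is finite and non-empty, the poset $(Q,\preceq)$ has at least one minimal element. I will argue this directly. Pick any $q_0\in Q$. If $q_0$ is not minimal, choose $q_1\prec q_0$, and continue in this way. By transitivity the elements of the resulting strictly descending chain $q_0\succ q_1\succ\cdots$ are pairwise comparable. By antisymmetry they are pairwise distinct, so the chain must stop after at most $|Q|$ steps. It stops at a minimal element $m$, and then $m\in\Fringe(\emptyset)$. Alternatively, I can invoke Proposition~\ref{prop:knowledge_space}(iv) applied to $K=Q\neq\emptyset$. It gives a chain $\emptyset=K_0\subset K_1\subset\cdots$ with $K_1\setminus K_0=\{p_0\}$ and $p_0\in\Fringe(\emptyset)$, which yields non-emptiness immediately.

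\textbf{Singletons are valid states.} Since $\emptyset\in\K$ by Proposition~\ref{prop:knowledge_space}(i), Lemma~\ref{lem:fringe_valid} with $K=\emptyset$ gives $\emptyset\cup\{q\}=\{q\}\in\K(Q,\preceq)$ for every $q\in\Fringe(\emptyset)$. As a sanity check, I can also verify this directly: if $p\preceq q'$ with $q'\in\{q\}$, then $p\preceq q$, and minimality of $q$ forces $p=q$.

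I do not expect a real obstacle here. The only point needing care is the existence of a minimal element, and that step genuinely uses finiteness of $Q$ together with antisymmetry. The statement would fail for a cyclic preorder, which is excluded by Definition~\ref{def:surmise}, and it could fail for an infinite descending chain, which finiteness excludes. I will also remark that the hypothesis $E_u(t_0)=\emptyset$ plays no role beyond identifying the user's state with the bottom element $\hat{0}$ of the lattice in Proposition~\ref{prop:birkhoff}.
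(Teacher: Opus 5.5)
Your proposal is correct and follows the paper's proof. Non-emptiness comes from the existence of a minimal element in the finite non-empty poset $(Q,\preceq)$; you spell out the descending-chain argument that the paper compresses into an appeal to well-foundedness. Validity of each singleton then follows from Lemma~\ref{lem:fringe_valid} applied at $K=\emptyset$.
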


\begin{myproof}
Since $Q$ is finite and non-empty, any partial order on $Q$ has at least one
minimal element under $\prec$ (the finiteness of $Q$ guarantees well-foundedness,
which precludes infinite descending chains and hence ensures minimal elements exist),
so $\Fringe(\emptyset)$ is non-empty. For $q \in \Fringe(\emptyset)$:
$\down{q} \setminus \{q\} = \emptyset \subseteq \emptyset$, so $\{q\}$ is valid
by Lemma~\ref{lem:fringe_valid}.
\end{myproof}

A lightweight questionnaire maps stated interests to categories in $\Fringe(\emptyset)$,
initializing recommendations using the $w_{\beta}$ component of the interest score. This constitutes knowledge-based recommendation \citep{burke2002hybrid}: principled recommendations before any
user data is collected, with a formal validity guarantee conditional on the correctness of the inferred surmise relation~$\preceq$ (see Limitation~L5).

\textbf{Strategy~2---Popularity bootstrapping.}
When the user declines the questionnaire, set $w_{\alpha}=w_{\gamma}=w_{\delta}=0$, $w_{\beta}=1$,
ranking $\Fringe(\emptyset)$ by inherent attractiveness. Weights are progressively
rebalanced toward preference- and collaborative-based components as interactions
accumulate, via $w_{\alpha}(t) = w_{\alpha}^\infty(1 - e^{-t/\tau})$.

\textbf{Strategy~3---Stereotype initialization.}
Map new users to exploration archetypes (\textit{Cultural Discoverer},
\textit{Culinary Explorer}, \textit{Historical Researcher}) based on available
contextual signals, initializing $\mathbf{p}_u(t_0)$ from the corresponding cluster
centroid \citep{schein2002cold}.

\subsubsection{New POIs}

New POI $L_{\mathrm{new}}$ is initially placed in $\Fringe(\emptyset)$ with no
surmise links (conservative placement guaranteeing no artificial structural constraints).
It is included in the soft neighborhood $N_f(L_j)$ for its most similar existing POIs
via the neighborhood function defined in \S\ref{sec:locdb}. As trajectory data
accumulates, Algorithm~\ref{alg:prefixspan} is re-executed with $L_{\mathrm{new}}$
included, and candidate surmise relations are integrated via the incremental update
of Proposition~\ref{prop:fringe_complexity} after expert validation.

\section{Recommendation Algorithm}
\label{sec:algorithm}

\paragraph*{Notational discipline.}
$E_u(t) \in \K$ denotes the \emph{confirmed exploration state}, updated exclusively
upon high-confidence signals. $\hat{E}_u \in \K$ denotes the \emph{working state}
(MAP estimate, computed at the start of each cycle and discarded at its end).
$P_u : \K \to [0,1]$ denotes the \emph{state distribution}. The confirmed state
is never directly overwritten by any probabilistic estimate.
We use $\eta_i \in (0,1)$ exclusively for the BLIM false-negative rate of item $L_i$,
and $\ell_r \in (0,1]$ for the preference learning rate.

\subsection{The Full ESRS Pipeline}
\label{sec:pipeline}

\begin{algorithm}[htb]
\caption{ESRS Recommendation Pipeline}
\label{alg:esrs}
\begin{algorithmic}[1]
\Require User $u$; confirmed state $E_u(t) \in \K$; distribution $P_u$;
         preference vector $\mathbf{p}_u(t)$;
         exploration space $(Q, \K, \preceq)$; Location Database $\mathcal{D}$;
         mode $\in \{\textsc{Path}, \textsc{Rank}\}$; max steps $k_{\max}$;
         beam width $B$; diversity weights $(w_N, w_D)$
\Ensure Recommended path $\pi^*$ or diversified ranked list $\mathcal{R}$

\State \textbf{// Phase 1: State Assessment}
\State $(P_u,\, \hat{E}_u) \leftarrow
  \Call{BLIMAssess}{P_u, \mathbf{r}_{\mathrm{recent}}, B}$
\hfill\Comment{$P_u$ updated in-place; $\hat{E}_u$ is the MAP estimate}
\State $\hat{E}_u \leftarrow \hat{E}_u \cup E_u(t)$
\hfill\Comment{Confirmed visits always present in working state}
\State Compute $\I(u, L_i, \hat{E}_u)$ via Eq.~\eqref{eq:interest} for all $L_i \in Q$
\hfill\Comment{Used directly in \textsc{Rank} mode; \textsc{Path} mode
recomputes $\I(u,q,K)$ at each DP node with the evolving state $K$
(Algorithm~\ref{alg:dp})}

\State \textbf{// Phase 2: Fringe Construction}
\State $F \leftarrow \Fringe(\hat{E}_u)$
\hfill\Comment{$O(n+|E_H|)$; by Definition~\ref{def:fringe}, $\Fringe(\hat{E}_u) \subseteq Q \setminus \hat{E}_u$; Proposition~\ref{prop:fringe_complexity}}
\If{$F = \emptyset$}
  \State \textbf{emit} ``Exploration complete; propose new semantic cluster or city area''
  \State \Return $\emptyset$
\EndIf

\State \textbf{// Phase 3a: Optimal Path}
\If{mode $= \textsc{Path}$}
  \State \Return $\Call{DPPath}{\hat{E}_u, k_{\max}, B}$
  \hfill\Comment{Algorithm~\ref{alg:dp}; fringe computed internally}
\EndIf

\State \textbf{// Phase 3b: Diversified Ranking}
\If{mode $= \textsc{Rank}$}
  \State \Return $\Call{DiverseRank}{u, F, \hat{E}_u, k_{\max}, w_N, w_D}$
  \hfill\Comment{Algorithm~\ref{alg:rank}}
\EndIf
\end{algorithmic}
\end{algorithm}

\begin{remark}[Confirmed state invariant]
\label{rem:invariant}
Line~3 merges $\hat{E}_u$ with $E_u(t)$, ensuring $E_u(t) \subseteq \hat{E}_u$.
Since $E_u(t) \in \K$ and $\hat{E}_u \in \K$, and $\K$ is closed under arbitrary
unions (Proposition~\ref{prop:knowledge_space}(ii)), the union $\hat{E}_u \cup E_u(t)
\in \K$. The working state is discarded at the end of the
pipeline and never written back to $E_u(t)$; only Algorithm~\ref{alg:feedback} may
update the confirmed state, exclusively on high-confidence signals.
\end{remark}

\subsection{Probabilistic State Estimation}
\label{sec:blim_algo}

\begin{definition}[Engagement signal and response vector]
\label{def:feedback_signals}
Let $A \subseteq Q$ be a set of assessed items. For each $L_i \in A$, the
\emph{engagement signal} $r_i \in \{0,1\}$ indicates positive engagement
($r_i = 1$: dwell time above $\theta_d$, explicit check-in, or user-initiated rating).
A signal $r_i = 1$ is \emph{high-confidence} if it additionally satisfies dwell
time above $\theta_d^+ > \theta_d$ or an explicit rating.
\end{definition}

The BLIM likelihood under local independence is:
\begin{equation}
\label{eq:blim_likelihood}
\ell(\mathbf{r} \mid K) = \prod_{i:\, r_i = 1}
  \bigl[(1 - \beta_i)^{\mathbb{1}[L_i \in K]} \cdot \beta_i^{\mathbb{1}[L_i \notin K]}\bigr]
  \cdot
  \prod_{i:\, r_i = 0}
  \bigl[\eta_i^{\mathbb{1}[L_i \in K]} \cdot (1-\eta_i)^{\mathbb{1}[L_i \notin K]}\bigr].
\end{equation}
The Bayesian posterior is $P_u(K \mid \mathbf{r}) \propto \ell(\mathbf{r} \mid K) \cdot P_u(K)$.

\begin{algorithm}[htb]
\caption{BLIM State Estimation (Exact and Beam Approximation)}
\label{alg:blim}
\begin{algorithmic}[1]
\Require Distribution $P_u$; response vector $\mathbf{r}$ over $A$;
  BLIM parameters $\{\beta_i, \eta_i\}$; beam width $B$
\Ensure Updated distribution $P_u'$; MAP estimate $\hat{E}_u$
\State $\mathcal{B} \leftarrow \text{top-}B \text{ states by } P_u(\cdot)$
\ForAll{$K \in \mathcal{B}$}
  \State $\tilde{P}(K) \leftarrow P_u(K)$
  \State $\tilde{P}(K) \leftarrow \tilde{P}(K) \cdot
    \prod_{i \in A:\, r_i=1}
      \bigl[(1-\beta_i)^{\mathbb{1}[L_i \in K]} \beta_i^{\mathbb{1}[L_i \notin K]}\bigr]$
  \State $\tilde{P}(K) \leftarrow \tilde{P}(K) \cdot
    \prod_{i \in A:\, r_i=0}
      \bigl[\eta_i^{\mathbb{1}[L_i \in K]} (1-\eta_i)^{\mathbb{1}[L_i \notin K]}\bigr]$
\EndFor
\State $Z \leftarrow \sum_{K \in \mathcal{B}} \tilde{P}(K)$;
       $P_u'(K) \leftarrow \tilde{P}(K)/Z$ for $K \in \mathcal{B}$, $0$ otherwise
\State \Return $P_u'$,\ $\hat{E}_u \leftarrow \argmax_{K \in \mathcal{B}} P_u'(K)$
\end{algorithmic}
\end{algorithm}

\begin{remark}[Validity of the beam approximation]
\label{rem:beam_validity}
With $B < |\K|$, the approximation error is controlled by the prior mass outside
the beam, $\varepsilon_B = 1 - \sum_{K \in \mathcal{B}} P_u(K)$. If the true MAP
state $K^*$ is in $\mathcal{B}$---which holds whenever the prior is sufficiently
concentrated---the MAP estimate is exact. The MAP estimate $\hat{E}_u$ is always a
member of $\K$ (the argmax is taken within $\mathcal{B} \subseteq \K$).
When the EM algorithm is used for parameter estimation, the beam approximation discards
the probability mass of states outside $\mathcal{B}$. Provided $B$ is chosen large enough
to capture the vast majority of the posterior mass (e.g., $B \ge 100$), the impact on
the M-step estimates is negligible; this is the common practice in BLIM applications
\citep{desmarais2012real}.
\end{remark}

\subsubsection{EM Parameter Estimation}
\label{sec:blim_em}

When ground-truth labels are unavailable, EM alternates between:
\begin{itemize}[noitemsep]
  \item \textbf{E-step}: compute $P_u(K \mid \mathbf{r}^{(s)})$ for each training
        sequence using current parameters.
  \item \textbf{M-step}: update by closed-form maximization:
  \[
    \hat{\beta}_i = \frac{\sum_s \sum_{K \not\ni L_i} P_u(K \mid \mathbf{r}^{(s)})
      \cdot \mathbb{1}[r_i^{(s)}=1]}{\sum_s \sum_{K \not\ni L_i} P_u(K \mid \mathbf{r}^{(s)})},
  \]
  with a symmetric expression for $\hat{\eta}_i$, and
  $P_u(K) \propto \sum_s P_u(K \mid \mathbf{r}^{(s)})$.
\end{itemize}
Convergence to a local maximum is guaranteed by the standard EM monotonicity argument
under exact inference \citep{falmagne1988stochastic}; when the beam approximation is
used in the E-step, the discarded posterior mass breaks the monotonicity guarantee,
and convergence is not formally assured (see Remark~\ref{rem:beam_validity}).

\subsection{Dynamic Programming for Path Recommendation}
\label{sec:dp}

\subsubsection{Bellman Formulation and Value Function}

The value function $V : \K \times \{0,\ldots,k_{\max}\} \to \mathbb{R}$ satisfies
the following Bellman recursion \citep{bellman1957dynamic}:
\begin{equation}
\label{eq:bellman}
V(K, j) = \begin{cases}
  0 & \text{if } j = 0 \text{ or } \Fringe(K) = \emptyset, \\
  \displaystyle\max_{q \in \Fringe(K)}\bigl[\I(u, q, K) + V(K \cup \{q\}, j-1)\bigr]
  & \text{otherwise.}
\end{cases}
\end{equation}
The policy is $\pi^*(K, j) = \argmax_{q \in \Fringe(K)}\bigl[\I(u, q, K) + V(K \cup \{q\}, j-1)\bigr]$.

\begin{proposition}[Sub-Path Optimality]
\label{prop:subpath}
Let $\pi^* = (q_1, \ldots, q_k)$ be a path achieving $V(\hat{E}_u, k)$ under
Bellman recursion~\eqref{eq:bellman}, with $K_0 = \hat{E}_u$ and
$K_j = K_{j-1} \cup \{q_j\}$. For any $0 \leq j < k$, the sub-path
$(q_{j+1}, \ldots, q_k)$ achieves $V(K_j, k-j)$, i.e., it is optimal for the
sub-problem starting from state $K_j$ with remaining horizon $k - j$.
\end{proposition}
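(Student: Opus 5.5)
The plan is the standard exchange (cut-and-paste) argument for Bellman recursions. I will argue by contradiction, using only the recursion \eqref{eq:bellman} and Corollary~\ref{cor:path_valid}. First I make precise what it means for a path to \emph{achieve} $V(K,h)$. A sequence $(p_1,\ldots,p_h)$ is feasible from $K$ if each $p_i \in \Fringe(K'_{i-1})$ with $K'_0 = K$ and $K'_i = K'_{i-1}\cup\{p_i\}$. Its value is $\sum_i \I(u,p_i,K'_{i-1})$. Paths may terminate early if the fringe empties; the recursion assigns $0$ from that point on.

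\textbf{Preliminary lemma.} For every $K \in \K$ and $h \geq 0$, $V(K,h)$ is the maximum of the path value over all feasible paths of length at most $h$ from $K$, truncated only when the fringe is empty. I prove this by induction on $h$. The base case $h=0$, or $\Fringe(K)=\emptyset$, gives $0$. For the inductive step, every feasible path starts with some $q \in \Fringe(K)$. By Lemma~\ref{lem:fringe_valid}, $K\cup\{q\}\in\K$, so the induction hypothesis applies to the tail. Taking the maximum over $q$ reproduces \eqref{eq:bellman}. The state space is finite, $|Q|<\infty$, so every maximum is attained.

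\textbf{Main step.} Fix $0 \le j < k$. The sub-path $(q_{j+1},\ldots,q_k)$ is feasible from $K_j$ by Corollary~\ref{cor:path_valid}, since each $q_i\in\Fringe(K_{i-1})$. So its value is at most $V(K_j,k-j)$. Suppose the inequality were strict. Then the lemma gives a feasible path $(q'_{j+1},\ldots)$ from $K_j$ with strictly larger value. Concatenate $(q_1,\ldots,q_j)$ with this path. The result is feasible from $\hat{E}_u$, because feasibility depends only on the current state, and the prefix leads exactly to $K_j$. The interest of each step is $\I(u,q,K)$, which depends only on the current state $K$ and not on the history. So the total value splits additively as prefix value plus suffix value. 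The concatenated path therefore has value strictly greater than that of $\pi^*$, which equals $V(\hat{E}_u,k)$. This contradicts the lemma applied at $(\hat{E}_u,k)$.

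The main obstacle is the Markov property that makes the splitting valid. Both the fringe and the reward $\I(u,q,K)$ must depend only on the set $K_j$, not on the order in which $q_1,\ldots,q_j$ were visited. This holds because $\I$ is a function of $(u,L_i,K)$ in \eqref{eq:interest}, with $\mathrm{Rel}$ depending on $K$ only. It also holds because $\Fringe$ is defined from $K$ alone (Definition~\ref{def:fringe}). I will state this explicitly, and note that the preference and collaborative terms are held fixed within one planning cycle.

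A direct alternative avoids contradiction. By definition of achieving the maximum, each $q_{i}$ attains the argmax in \eqref{eq:bellman} at $(K_{i-1},k-i+1)$. Unrolling the recursion from index $j+1$ then gives
\[
V(K_j,k-j)=\sum_{i=j+1}^{k}\I(u,q_i,K_{i-1})
\]
by backward induction, with early termination handled by the zero case.
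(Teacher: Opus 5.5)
Your main argument is the paper's own cut-and-paste contradiction: concatenate the prefix $(q_1,\ldots,q_j)$ with a strictly better tail from $K_j$, which is feasible by Corollary~\ref{cor:path_valid}, and your preliminary lemma identifying $V(K,h)$ with the maximum over feasible paths (plus the explicit Markov remark) makes rigorous what the paper's proof uses implicitly. The one caveat is your closing ``direct alternative'': that each $q_i$ attains the argmax in \eqref{eq:bellman} is true by definition only if $\pi^*$ is generated by the policy; if ``achieving $V(\hat{E}_u,k)$'' only means having that value, it has to be derived from the sandwich $V(\hat{E}_u,k) \ge \I(u,q_1,K_0)+V(K_1,k-1) \ge \sum_{i=1}^{k}\I(u,q_i,K_{i-1}) = V(\hat{E}_u,k)$, which rests on the same preliminary lemma.
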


\begin{myproof}
Suppose for contradiction that for some $j$ there exists a path $(q'_{j+1}, \ldots, q'_k)$
from $K_j$ with \\ $\sum_{\ell=j+1}^k \I(u, q'_\ell, K'_{\ell-1}) > \sum_{\ell=j+1}^k
\I(u, q_\ell, K_{\ell-1})$, where $K'_\ell = K_j \cup \{q'_{j+1},\ldots,q'_\ell\}$.
Then the path \\ $(q_1, \ldots, q_j, q'_{j+1}, \ldots, q'_k)$ from $\hat{E}_u$ achieves
value strictly greater than $V(\hat{E}_u, k)$. By Corollary~\ref{cor:path_valid} all
intermediate states $K'_\ell$ are valid, so this path is feasible---contradicting
the optimality of $\pi^*$.
\end{myproof}

\subsubsection{Memoization over the Lattice}

The value $V(K, j)$ depends on $K$ only through the identity of the set $K$, not on
the path used to reach $K$. Consequently, if two traversal sequences reach the same
state $K$ with the same remaining horizon $j$, $V(K, j)$ is computed only once.
States are identified as sorted tuples of POI identifiers (canonical representation
admitting $O(n)$ hashing). A top-down recursive call on $\textsc{DPVal}(K,j)$ triggers
$\textsc{DPVal}(K\cup\{q\},j{-}1)$ before returning, guaranteeing
that every sub-problem with a strictly larger state set (or smaller
remaining horizon) is resolved before its parent.
Each pair $(K,j)$ is evaluated at most once via the memo table.

\begin{algorithm}[htb]
\caption{ESRS Path Recommendation via Memoized DP}
\label{alg:dp}
\begin{algorithmic}[1]
\Require Working state $\hat{E}_u$; horizon $k_{\max}$; beam width $B$
\Ensure  Optimal path $\pi^*$; optimal value $V^*$

\State Initialize $\mathtt{memo}\leftarrow\{\}$,
       $\mathtt{pred}\leftarrow\{\}$

\Function{DPVal}{$K,j$}
  \If{$(K,j)\in\mathtt{memo}$} \Return $\mathtt{memo}[(K,j)]$
  \EndIf
  \If{$j=0$ \textbf{or} $\Fringe(K)=\emptyset$}
    \State \Return $0$
  \EndIf
  \State $C\leftarrow\Call{TopBFringe}{K,B}$
  \State $q^*\leftarrow\argmax_{q\in C}
      \bigl[\I(u,q,K)+\Call{DPVal}{K\cup\{q\},j{-}1}\bigr]$
    \State $\mathtt{pred}[(K,j)]\leftarrow q^*$
    \State $\mathtt{memo}[(K,j)]\leftarrow
      \I(u,q^*,K) + \mathtt{memo}[(K\cup\{q^*\},j{-}1)]$
    \State \Return $\mathtt{memo}[(K,j)]$
\EndFunction

\State $V^*\leftarrow\Call{DPVal}{\hat{E}_u,k_{\max}}$
\State $\pi^*\leftarrow[\,]$; $K_c\leftarrow\hat{E}_u$;
       $j_c\leftarrow k_{\max}$
\While{$j_c>0$ \textbf{and} $(K_c,j_c)\in\mathtt{pred}$}
  \State $q^*\leftarrow\mathtt{pred}[(K_c,j_c)]$;
         $\pi^*\mathtt{.append}(q^*)$;
         $K_c\leftarrow K_c\cup\{q^*\}$; $j_c\leftarrow j_c-1$
\EndWhile
\State \Return $\pi^*$, $V^*$

\Procedure{TopBFringe}{$K,B$}
  \State $F_K\leftarrow\Fringe(K)$
  \If{$B=\infty$ \textbf{or} $|F_K|\leq B$} \Return $F_K$
  \Else\ \Return top-$B$ items in $F_K$ by $\I(u,\cdot,K)$
  \EndIf
\EndProcedure
\end{algorithmic}
\end{algorithm}

\begin{proposition}[Correctness of Memoized DP]
\label{prop:dp_correct}
Algorithm~\ref{alg:dp} with $B = \infty$ computes $V(\hat{E}_u, k_{\max})$ correctly
and returns a path achieving this value. Each state-horizon pair $(K,j)$ is evaluated
at most once.
\end{proposition}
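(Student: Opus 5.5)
We argue by strong induction on the horizon $j$, establishing for every valid state $K \in \K(Q,\preceq)$ and every $0 \le j \le k_{\max}$ that a call $\textsc{DPVal}(K,j)$ returns exactly $V(K,j)$ as defined by the Bellman recursion~\eqref{eq:bellman}. The recursion terminates because each recursive call strictly decreases $j$, and because $|K|$ strictly grows along any call chain (fringe items lie in $Q\setminus K$ by Definition~\ref{def:fringe}). Every argument passed to $\textsc{DPVal}$ is a valid state: the root is $\hat{E}_u \in \K$ (Remark~\ref{rem:invariant}), and each child $K \cup \{q\}$ with $q \in \Fringe(K)$ is valid by Lemma~\ref{lem:fringe_valid}. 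Hence the induction ranges only over valid states, so $V$ is well defined there.

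\textbf{Base case and inductive step.} If $j=0$ or $\Fringe(K)=\emptyset$, the function returns $0 = V(K,j)$. Otherwise, with $B=\infty$, $\textsc{TopBFringe}$ returns the full fringe $C = \Fringe(K)$. By the induction hypothesis at horizon $j-1$, each call $\textsc{DPVal}(K\cup\{q\},j-1)$ returns $V(K\cup\{q\},j-1)$.

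Two details need checking in this step. First, the stored value $\mathtt{memo}[(K\cup\{q^*\},j-1)]$ used on the memo-write line must actually exist. A child that hit the base case returns $0$ without writing to $\mathtt{memo}$, so we read that line as using the value just returned (interpreting an absent entry as $0$, which is correct exactly in the base case). Second, the memo table must be sound. It is keyed by the set $K$ via its canonical sorted tuple, and $V(K,j)$ depends only on $(K,j)$, not on how $K$ was reached. The fixed user $u$ and the score $\I(u,q,K)$ are functions of $K$ alone. So any cached value equals $V(K,j)$ by induction, and the maximum over $C$ equals the Bellman maximum.

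\textbf{At most once.} Before computing, the function checks $\mathtt{memo}$. After computing a non-base pair, it stores the result, so any later call on the same $(K,j)$ returns immediately. Base-case pairs cost $O(1)$ and involve no evaluation of the maximum.

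\textbf{Path reconstruction.} This is the main obstacle. By construction, $\mathtt{pred}[(K,j)]$ is the argmax $q^*$, that is, the policy $\pi^*(K,j)$. The while loop follows $K_c \leftarrow K_c\cup\{q^*\}$, $j_c \leftarrow j_c-1$, and each visited pair was itself evaluated during the recursion, because $q^*$'s child was called. So the loop follows pred entries until it reaches either $j_c=0$ or a state with empty fringe, where no pred entry exists and $V=0$. Telescoping the identity
\[
V(K_c,j_c)=\I(u,q^*,K_c)+V(K_c\cup\{q^*\},j_c-1)
\]
along the path shows that the sum of scores equals $V(\hat{E}_u,k_{\max})=V^*$. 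By Corollary~\ref{cor:path_valid}, the path is valid and consists of distinct items. The delicate part is ensuring that the pred entries consulted belong to the evaluation that produced the optimum. Because each pair is computed exactly once, each has a single consistent pred entry, and the optimal child pair is necessarily among those evaluated, so no mismatch can occur.
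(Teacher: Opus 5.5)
Your proof is correct and follows essentially the same route as the paper's: induction on the horizon $j$ for the returned value, the memo-table check for the at-most-once claim, and following $\mathtt{pred}$ from $(\hat{E}_u,k_{\max})$ together with Corollary~\ref{cor:path_valid} for the reconstructed path. You also handle two details the paper leaves implicit. The first is base-case children, which return $0$ without ever being written to $\mathtt{memo}$. The second is the telescoping argument showing that the reconstructed path actually attains $V^*$.
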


\begin{myproof}
\textbf{Correctness}: by structural induction on $j$.
\textit{Base case $j = 0$}: \textsc{DPVal}$(K, 0)$ returns $0$,
matching Eq.~\eqref{eq:bellman}.
\textit{Base case $\Fringe(K) = \emptyset$, $j > 0$}: \textsc{DPVal}$(K, j)$
returns $0$, matching the second clause of the base case in
Eq.~\eqref{eq:bellman}.
\textit{Inductive step}: assuming $\Fringe(K) \neq \emptyset$ and
\textsc{DPVal}$(K', j{-}1)$ returns $V(K', j{-}1)$ correctly for all $K'$,
the expression
\[
  v^* = \max_{q \in C}\bigl[\I(u,q,K) +
  \Call{DPVal}{K \cup \{q\}, j{-}1}\bigr]
\]
evaluates the Bellman recursion~\eqref{eq:bellman} correctly
(with $C = \Fringe(K)$ when $B = \infty$).

\textbf{Each pair evaluated at most once}: the memo table check
at the top of \textsc{DPVal} returns immediately on any revisit,
so each $(K,j)$ pair triggers the recursive computation at most
once. Termination follows because every recursive call increases
$|K|$ or decreases $j$, and both quantities are bounded
($|K| \leq |Q|$, $j \geq 0$).

\textbf{Backtracking}: the predecessor map $\mathtt{pred}$
stores, for each $(K,j)$, the $\argmax$ item; following it from
$(\hat{E}_u, k_{\max})$ recovers a path realizing
$V(\hat{E}_u, k_{\max})$.
Corollary~\ref{cor:path_valid} ensures all intermediate states
$K \cup \{q^*\}$ are valid.
\end{myproof}

\begin{remark}[Effect of Beam Width on Complexity and Optimality]
\label{rem:beam_dp}
With $B = \infty$:
$T_{\mathrm{exact}} = O\bigl(k_{\max} \cdot |\K_{\mathrm{reach}}(k_{\max})| \cdot
(n + |E_H|)\bigr)$.
With $B < \infty$:
$T_{\mathrm{beam}} = O\bigl(k_{\max} \cdot B \cdot (n + |E_H|)\bigr)$.
The beam search is guaranteed optimal when the optimal first step is among the top-$B$
fringe items by interest score, which is empirically robust when $w_{\delta} > 0$.
Proposition~\ref{prop:subpath} continues to hold under exact search; under beam
search, sub-path optimality holds conditionally on the beam containing the optimal
intermediate states.
\end{remark}

\begin{remark}[TTDP Integration via Augmented State Space]
\label{rem:ttdp_dp}
As noted in Section~\ref{sec:bg_ttdp}, TTDP-style temporal constraints (time budget
$T_{\max}$, time windows $[o_i, c_i]$) can be incorporated by augmenting the DP
state from $K$ to $(K, t_{\mathrm{elapsed}})$:
\[
  V(K, t, j) = \max_{\substack{q \in \Fringe(K)\\ t + \mathrm{dur}(q) + \mathrm{travel}(K,q) \leq T_{\max}}}
  \!\!\!\bigl[\I(u,q,K) + V(K \cup \{q\},\, t + \mathrm{dur}(q) + \mathrm{travel}(K,q),\, j-1)\bigr].
\]
The structural guarantees of Lemma~\ref{lem:fringe_valid} and Corollary~\ref{cor:path_valid}
are unaffected; the temporal constraint simply reduces the feasible fringe at each step.
\end{remark}

\subsection{Diversified Top-$k$ Ranking}
\label{sec:rankmode}

\begin{definition}[Intra-list Diversity]
\label{def:diversity}
For $\mathcal{R} \subseteq F$ already selected and $L_i \in F \setminus \mathcal{R}$:
\[
  D(L_i, \mathcal{R}) = 1 - \frac{1}{|\mathcal{R}|}\sum_{L_j \in \mathcal{R}}
    \mathrm{sim}_{\mathrm{cat}}(L_i, L_j) \cdot \exp(-\lambda \cdot \mathrm{dist}(L_i, L_j)),
\]
where $\lambda > 0$ is a scale parameter controlling the decay of geographic
proximity relative to categorical similarity (larger $\lambda$ penalizes
geographically distant pairs more strongly; $\lambda$ is set by cross-validation
or domain expertise).
Set $D(L_i, \emptyset) = 1$.
$\mathrm{sim}_{\mathrm{cat}}(L_i, L_j)$ and $\mathrm{dist}(L_i, L_j)$ for
$L_i, L_j \in F$ are precomputed in $O(|F|^2)$ before
Algorithm~\ref{alg:rank} is called, so each evaluation of $D$ costs $O(|\mathcal{R}|)$.
\end{definition}

The diversified score is:
\begin{equation}
\label{eq:rank_score}
S(u, L_i, \hat{E}_u, \mathcal{R}) =
  w_{\I} \cdot \I(u, L_i, \hat{E}_u)
  + w_N \cdot (1 - \mathrm{Pop}_i)
  + w_D \cdot D(L_i, \mathcal{R}),
\quad w_{\I} + w_N + w_D = 1.
\end{equation}

\begin{algorithm}[htb]
\caption{Diversified Top-$k$ Ranking (MMR-style)}
\label{alg:rank}
\begin{algorithmic}[1]
\Require User $u$; fringe $F$; working state $\hat{E}_u$; $k_{\max}$;
         weights $w_{\I}, w_N, w_D$
\Ensure Ranked list $\mathcal{R}$ of size $\min(k_{\max}, |F|)$
\State $\mathcal{R} \leftarrow [\,]$; $\mathcal{F}_r \leftarrow F$
\While{$|\mathcal{R}| < k_{\max}$ \textbf{and} $\mathcal{F}_r \neq \emptyset$}
  \State $L^* \leftarrow \argmax_{L_i \in \mathcal{F}_r} S(u, L_i, \hat{E}_u, \mathcal{R})$
  \State $\mathcal{R}\mathtt{.append}(L^*)$; $\mathcal{F}_r \leftarrow \mathcal{F}_r \setminus \{L^*\}$
\EndWhile
\State \Return $\mathcal{R}$
\end{algorithmic}
\end{algorithm}

\begin{definition}[Structural Serendipity]
\label{def:serendipity}
Let $\mathcal{C}_u = \{\mathrm{Cat}_i : L_i \in \hat{E}_u\}$. A POI
$L_i \in \Fringe(\hat{E}_u)$ is \emph{structurally serendipitous} if
$\mathrm{Cat}_i \notin \mathcal{C}_u$ and $|\down{L_i}| \geq 2$.
\end{definition}

\begin{remark}[Sufficient condition for structural serendipity]
\label{rem:serendipity}
A sufficient condition for $\Fringe(\hat{E}_u)$ to contain at least one
structurally serendipitous item is that there exists a covering relation
$L_j \lessdot L_i$ in $(Q,\preceq)$ such that $\mathrm{Cat}_j \in \mathcal{C}_u$,
$\mathrm{Cat}_i \notin \mathcal{C}_u$, and
$\down{L_i} \setminus \{L_i\} \subseteq \hat{E}_u$.
Under these conditions, $L_i \in \Fringe(\hat{E}_u)$ by
Definition~\ref{def:fringe}, $|\down{L_i}| \geq 2$ (since $L_j \prec L_i$
is a strict predecessor), and $\mathrm{Cat}_i \notin \mathcal{C}_u$, so $L_i$
is structurally serendipitous. This configuration arises naturally as
$\hat{E}_u$ grows: each fringe-guided transition into a new category exposes
cross-category covering relations as candidates.
\end{remark}

\subsection{Surmise Relation Inference}
\label{sec:seqpattern}

\begin{algorithm}[htb]
\caption{Surmise Relation Inference via Sequential Pattern Mining}
\label{alg:prefixspan}
\begin{algorithmic}[1]
\Require Trajectory database $S = \{\tau^{(1)},\ldots,\tau^{(m)}\}$;
         minimum support $\sigma$; confidence threshold $\tau_c$;
         significance level $\alpha_{\mathrm{stat}}$
\Ensure Partial order $\preceq$ on $Q$

\State $P_2 \leftarrow \{(a,b) : \mathrm{support}(\langle a,b\rangle, S) \geq \sigma\}$
\hfill\Comment{Via PrefixSpan \citep{pei2001prefixspan}}
\ForAll{$(a,b) \in P_2$}
  \State $n_a    \leftarrow |\{\tau \in S : a \in \tau\}|$
         \hfill\Comment{trajectories containing $a$}
  \State $n_{ab} \leftarrow |\{\tau \in S : a \text{ precedes } b \text{ in } \tau\}|$
         \hfill\Comment{trajectories containing $a$ before $b$}
\EndFor

\State $\mathrm{Cand} \leftarrow \emptyset$
\ForAll{$(a,b) \in P_2$}
  \State $\hat{c}_{ab} \leftarrow n_{ab} / n_a$;
         $p_{\mathrm{val}} \leftarrow \Call{BinomTest}{n_{ab}, n_a, \tau_c}$
  \If{$\hat{c}_{ab} \geq \tau_c$ \textbf{and} $p_{\mathrm{val}} \leq \alpha_{\mathrm{stat}}$}
    \State $\mathrm{Cand} \leftarrow \mathrm{Cand} \cup \{(a,b)\}$
  \EndIf
\EndFor

\State \textbf{// Cycle resolution BEFORE transitive closure}
\ForAll{$(a,b) \in \mathrm{Cand}$ with $(b,a) \in \mathrm{Cand}$}
  \If{$\hat{c}_{ab} > \hat{c}_{ba}$}
    \State Remove $(b,a)$ from $\mathrm{Cand}$
  \ElsIf{$\hat{c}_{ba} > \hat{c}_{ab}$}
    \State Remove $(a,b)$ from $\mathrm{Cand}$
  \Else\ \Comment{Exact tie: remove both to avoid arbitrary choice}
    \State Remove $(a,b)$ and $(b,a)$ from $\mathrm{Cand}$
  \EndIf
\EndFor
\State $\mathrm{SCCs} \leftarrow \Call{TarjanSCC}{\mathrm{Cand}}$
\ForAll{$C \in \mathrm{SCCs}$ with $|C| > 1$}
  \State Retain only the highest-confidence edge within $C$; remove all others
\EndFor
\hfill\Comment{$\mathrm{Cand}$ is now a DAG}

\State $\preceq \leftarrow \Call{FloydWarshall}{\mathrm{Cand}}$
\hfill\Comment{$O(n^3)$; DAG ensures partial order}
\State Flag pairs with $\hat{c}_{ab} < \tau_c^{\mathrm{high}}$ for human review;
       remove unvalidated pairs
\State \Return $\preceq$
\end{algorithmic}
\end{algorithm}

The one-sided binomial test in Step~2 guards against spurious high-confidence
estimates from rare items: it tests whether $n_{ab}$ is significantly above
$c_{ab} = \tau_c$, combining an empirical threshold with a statistical significance
requirement to avoid false positives from POI pairs with few co-occurrences.

\subsubsection{Incremental Surmise Update}

\begin{algorithm}[htb]
\caption{Incremental Surmise Relation Update}
\label{alg:incr_surmise}
\begin{algorithmic}[1]
\Require Current $\preceq$; Hasse diagram $H$; new trajectories $\Delta S$;
         thresholds; fringe counter array $\mathtt{cnt}$
\Ensure Updated $\preceq'$, $H'$, $\mathtt{cnt}'$
\State $\Delta P_2 \leftarrow$ new high-confidence pairs from $\Delta S$
       (Algorithm~\ref{alg:prefixspan}, Steps 1--5)
\ForAll{$(a,b) \in \Delta P_2$ not yet in $\preceq$}
  \If{$(b,a) \in \preceq$} Resolve conflict; skip if $a \prec b$ via transitivity
  \Else
    \State Add covering edge $(a \lessdot b)$ to $E_H$ if appropriate;
           update $\preceq$ transitively ($O(n^2)$ worst case: all predecessors
           of $a$ must be related to all successors of $b$ in the updated closure)
    \State Update fringe counters via Proposition~\ref{prop:fringe_complexity}
           incremental procedure ($O(\mathrm{deg}^+(b))$ per active user)
  \EndIf
\EndFor
\State Validate acyclicity; \Return $\preceq'$, $H'$, $\mathtt{cnt}'$
\end{algorithmic}
\end{algorithm}

\subsection{Online Feedback Processing and State Update}
\label{sec:feedback}

\begin{algorithm}[htb]
\caption{Online Feedback Processing and User Model Update}
\label{alg:feedback}
\begin{algorithmic}[1]
\Require Confirmed state $E_u(t)$; distribution $P_u$; preference $\mathbf{p}_u(t)$;
         event $(L_i, r_i, \mathtt{hc}_i)$; BLIM parameters; $\ell_r$; $H$; $\mathtt{cnt}$
\Ensure Updated $E_u(t')$, $P_u'$, $\mathbf{p}_u(t')$

\State $I_{\mathrm{obs}} \leftarrow r_i$ \Comment{Or normalized dwell/rating if available}
\State $p_{u,i}(t') \leftarrow p_{u,i}(t) + \ell_r \cdot (I_{\mathrm{obs}} - p_{u,i}(t))$
\hfill\Comment{Step 1: update preference}
\State $(P_u', \hat{E}_u) \leftarrow \Call{BLIMAssess}{P_u, (r_i)_{L_i}, B}$
\hfill\Comment{Step 2: update distribution}
\If{$r_i = 1$ \textbf{and} $\mathtt{hc}_i = \mathtt{true}$}
  \If{$L_i \in \Fringe(E_u(t))$ \textbf{or} $L_i \in E_u(t)$}
    \State $E_u(t') \leftarrow E_u(t) \cup \{L_i\}$
    \hfill\Comment{Valid by Lemma~\ref{lem:fringe_valid}}
    \ForAll{$s$: $L_i \lessdot s$ in $H$}
      $\mathtt{cnt}[s] \leftarrow \mathtt{cnt}[s] - 1$
    \EndFor
  \Else
    \State \textbf{log} warning; $E_u(t') \leftarrow E_u(t)$
    \hfill\Comment{Structural invariant preserved}
  \EndIf
\Else\ $E_u(t') \leftarrow E_u(t)$
\EndIf
\State \Return $E_u(t')$, $P_u'$, $\mathbf{p}_u(t')$
\end{algorithmic}
\end{algorithm}

The guard at Step~3 is the operational enforcement of the downward-closure invariant.
When a user engages with a non-fringe POI independently of the system, admitting
it to the confirmed state could violate downward closure if prerequisites are absent.
The guard rejects the update to $E_u(t)$; the distribution $P_u$ is still updated,
which may surface missing prerequisites as candidates in the next cycle. The
incremental fringe update costs $O(\mathrm{deg}_H^+(L_i))$.

\subsection{Consolidated Complexity Analysis}
\label{sec:complexity_summary}

\begin{table}[htb]
\centering
\small
\caption{Time complexity of ESRS algorithmic components. $n = |Q|$; $|E_H|$ = edges
in Hasse diagram; $|\K_{\mathrm{reach}}|$ = states reachable within $k_{\max}$ steps;
$B$ = beam width; $|A|$ = assessed items; $m$ = training trajectories;
$\bar{d}$ = average trajectory length.}
\label{tab:complexity}
\setlength{\tabcolsep}{6pt}
\begin{tabular}{lllp{4.5cm}}
\toprule
\textbf{Component} & \textbf{Algorithm} & \textbf{Time complexity} & \textbf{Notes} \\
\midrule
Batch fringe        & Prop.~\ref{prop:fringe_complexity} & $O(n + |E_H|)$ & Per state \\
Incremental fringe  & Prop.~\ref{prop:fringe_complexity} & $O(\mathrm{deg}_H^+(q^*))$ & Per transition \\[4pt]
BLIM exact          & Alg.~\ref{alg:blim} ($B=|\K|$) & $O(|\K|\cdot|A|)$ & Exponential worst case \\
BLIM beam           & Alg.~\ref{alg:blim} ($B<|\K|$) & $O(B\cdot|A|)$ & Approximate MAP \\[4pt]
DP exact            & Alg.~\ref{alg:dp} ($B=\infty$) & $O(k_{\max}\cdot|\K_{\mathrm{reach}}|\cdot(n+|E_H|))$ & Optimal path \\
DP beam             & Alg.~\ref{alg:dp} ($B<\infty$) & $O(k_{\max}\cdot B\cdot(n+|E_H|))$ & Approx.\ optimal \\[4pt]
MMR ranking         & Alg.~\ref{alg:rank} & $O(k_{\max}^2\cdot|F|)$ & $|F|=|\Fringe(\hat{E}_u)|$; pairwise $\mathrm{sim}_{\mathrm{cat}}$ and $\mathrm{dist}$ precomputed in $O(|F|^2)$ \\[4pt]
Surmise (batch)     & Alg.~\ref{alg:prefixspan} & $O(m\bar{d} + n^3)$ & PrefixSpan + Floyd-Warshall \\
Surmise (incr.)     & Alg.~\ref{alg:incr_surmise} & $O(n^2)$ & Per new pair (transitive closure update) \\[4pt]
Feedback            & Alg.~\ref{alg:feedback} & $O(B\cdot|A| + \mathrm{deg}_H^+(L_i))$ & Per event \\
Matrix factorization & \S\ref{sec:usermodel} & $O(|\mathcal{D}_{\mathrm{obs}}|\cdot k)$ per epoch & Offline init. \\
\bottomrule
\end{tabular}
\end{table}

\begin{remark}[Decomposability of $|\K|$]
\label{rem:decomposability}
If $(Q,\preceq)$ decomposes into $c$ connected components
$(Q_1,\preceq_1),\ldots,(Q_c,\preceq_c)$ (all elements of distinct components
being incomparable under $\preceq$), then
$|\K(Q,\preceq)| = \prod_{i=1}^c |\K(Q_i,\preceq_i)|$.
\textit{Proof}: any order ideal $I$ of the disjoint union decomposes uniquely
as $(I \cap Q_1, \ldots, I \cap Q_c)$, each $I \cap Q_k$ being an order ideal
of $(Q_k,\preceq_k)$; the correspondence is a bijection, giving the product
formula. For a chain of height $h_i$ ($h_i + 1$ elements), $|\K(Q_i)| = h_i+1$,
so for $c$ independent chains of heights $h_1,\ldots,h_c$:
$|\K(Q,\preceq)| = \prod_{i=1}^c (h_i+1)$.
\end{remark}

The scalability bottleneck is the exponential dependence of exact BLIM inference
and exact DP on $|\K|$ and $|\K_{\mathrm{reach}}|$.
For $c$ independent components
with chain heights $h_1,\ldots,h_c$: $|\K| = \prod_{i=1}^c (h_i+1)$. For $c=5$,
$h_i=3$: $|\K| \leq 4^5 = 1024$ (fully tractable); for $c=15$, $h_i=4$:
$|\K| \leq 5^{15} \approx 3\times10^{10}$ (requiring beam approximation). The
incremental update procedures operate in near-linear time per event, ensuring
responsiveness at interaction frequency regardless of exploration space size.

\section{Worked Example}
\label{sec:example}

This section traces user $u$ through a complete execution of the ESRS pipeline on
the five-POI instance of Figure~\ref{fig:hasse}. The objectives are: (1) verify that
interest score computations are well-defined; (2) confirm that each intermediate
exploration state is a valid order ideal, verifying Corollary~\ref{cor:path_valid};
(3) trace the memoization mechanism of Algorithm~\ref{alg:dp}; (4) verify the
sub-path optimality of Proposition~\ref{prop:subpath}; and (5) illustrate the
feedback loop of Algorithm~\ref{alg:feedback}.

\subsection{Instance Setup}
\label{sec:example_setup}

$Q = \{q_1,\ldots,q_5\}$: $q_1$ = City Museum, $q_2$ = Latin Quarter, $q_3$ =
Medieval Library, $q_4$ = Art Gallery, $q_5$ = Rooftop Bar. Surmise relation:
$q_1 \prec q_4 \prec q_5$ and $q_2 \prec q_3$, all other pairs incomparable.
$|\K| = 4 \times 3 = 12$.

Principal ideals:
$\down{q_1} = \{q_1\}$, $\down{q_2} = \{q_2\}$, $\down{q_3} = \{q_2, q_3\}$,
$\down{q_4} = \{q_1, q_4\}$, $\down{q_5} = \{q_1, q_4, q_5\}$.
Sizes: $|\down{q_1}| = |\down{q_2}| = 1$, $|\down{q_3}| = |\down{q_4}| = 2$,
$|\down{q_5}| = 3$.

Synthetic scores are calibrated to reflect plausible relative rankings: user $u$
has an architectural history interest (high $\mathrm{Pref}$ for $q_1$, $q_4$, $q_5$)
and moderate literary heritage interest ($q_2$, $q_3$).

\subsection{Phase 1: State Assessment}
\label{sec:example_phase1}

Confirmed state $E_u(t) = \{q_1\}$. No additional engagement signals, so BLIM
leaves $P_u$ unchanged and $\hat{E}_u = \{q_1\}$.

$\mathrm{Rel}$ values w.r.t.\ $\hat{E}_u = \{q_1\}$:
\[
  \mathrm{Rel}(q_2, \{q_1\}) = 0, \quad
  \mathrm{Rel}(q_3, \{q_1\}) = 0, \quad
  \mathrm{Rel}(q_4, \{q_1\}) = \tfrac{1}{2}, \quad
  \mathrm{Rel}(q_5, \{q_1\}) = \tfrac{1}{3}.
\]

\begin{remark}[$\mathrm{Rel} = 0$ for prerequisite-free fringe items]
\label{rem:rel_zero_fringe}
When $\down{L_i} = \{L_i\}$, $\mathrm{Rel}(L_i, \hat{E}_u) = 0$ if $L_i \notin
\hat{E}_u$. For $q_2$: $\mathrm{Rel}(q_2, \{q_1\}) = 0$ yet $q_2 \in \Fringe(\{q_1\})$,
since the fringe condition $\down{q_2} \setminus \{q_2\} = \emptyset \subseteq
\{q_1\}$ holds trivially. These two facts coexist without contradiction. The $w_{\delta}
\cdot \mathrm{Rel}$ term is 0 for such items regardless of $w_{\delta}$; 
structural differentiation within the fringe for
prerequisite-free items can be provided by the depth-normalized
score $\mathrm{Depth}(L_i) = |\down{L_i}|/n$,
which is state-independent and complements $\mathrm{Rel}$.
\end{remark}

With equal weights $w_{\alpha}=w_{\beta}=w_{\gamma}=w_{\delta}=0.25$:

\begin{table}[htb]
\centering
\small
\caption{Interest score components given $\hat{E}_u = \{q_1\}$.
``excl.''\ denotes that $q_1 \in \hat{E}_u$ and is excluded from
recommendation scoring. $q_3$ and $q_5$ are labelled ``blocked''
because they are not in $\Fringe(\hat{E}_u)$ (Phase~2 fringe gate);
$q_2$ is fringe-accessible despite $\mathrm{Rel}=0$
(Remark~\ref{rem:rel_zero_fringe}).}
\label{tab:example_scores}
\begin{tabular}{lrrrrr}
\toprule
 & $q_1$ & $q_2$ & $q_3$ & $q_4$ & $q_5$ \\
\midrule
$\mathrm{Pref}_{u,i}$          & 0.90 & 0.60 & 0.50 & 0.80 & 0.85 \\
$\mathrm{Prop}_i$               & 0.85 & 0.70 & 0.65 & 0.75 & 0.80 \\
$\mathrm{Collab}_{u,i}$         & 0.80 & 0.65 & 0.55 & 0.70 & 0.80 \\
$\mathrm{Rel}(L_i, \hat{E}_u)$ & $1$ (visited) & $0$ & $0$ & $\tfrac{1}{2}$ & $\tfrac{1}{3}$ \\
\midrule
$\I(u, L_i, \hat{E}_u)$ & excl. & $0.4875$ & blocked & $0.6875$ & blocked \\
\bottomrule
\end{tabular}
\end{table}

\[
\I(u, q_2, \{q_1\}) = 0.25 \times (0.60 + 0.70 + 0.65 + 0) = 0.4875, \quad
\I(u, q_4, \{q_1\}) = 0.25 \times (0.80 + 0.75 + 0.70 + 0.5) = 0.6875.
\]

\subsection{Phase 2: Fringe Construction}
\label{sec:example_phase2}

Applying Definition~\ref{def:fringe} to $K = \{q_1\}$:
\begin{itemize}[noitemsep]
  \item $q_2$: $\down{q_2} \setminus \{q_2\} = \emptyset \subseteq \{q_1\}$ \checkmark
  \item $q_3$: $\down{q_3} \setminus \{q_3\} = \{q_2\} \not\subseteq \{q_1\}$ \xmark
  \item $q_4$: $\down{q_4} \setminus \{q_4\} = \{q_1\} \subseteq \{q_1\}$ \checkmark
  \item $q_5$: $\down{q_5} \setminus \{q_5\} = \{q_1,q_4\} \not\subseteq \{q_1\}$ \xmark
\end{itemize}
Hence $F = \{q_2, q_4\}$. Augmented states $\{q_1,q_2\}$ and $\{q_1,q_4\}$ are
both valid order ideals (verified by checking downward closure), confirming
Lemma~\ref{lem:fringe_valid}.

\subsection{Phase 3: Path Recommendation via Memoized DP ($k_{\max} = 2$)}
\label{sec:example_phase3}

The memoized recursion starts with \textsc{DPVal}$(\{q_1\}, 2)$.
This triggers two recursive calls,
\textsc{DPVal}$(\{q_1,q_2\}, 1)$ and
\textsc{DPVal}$(\{q_1,q_4\}, 1)$, before resolving the root.
At horizon $k=1$:
\begin{itemize}[noitemsep]
  \item $\Fringe(\{q_1,q_2\}) = \{q_3, q_4\}$ (verified:
        $\down{q_3}\setminus\{q_3\}=\{q_2\}\subseteq\{q_1,q_2\}$ \checkmark;
        $\down{q_4}\setminus\{q_4\}=\{q_1\}\subseteq\{q_1,q_2\}$ \checkmark;
        $\down{q_5}\setminus\{q_5\}=\{q_1,q_4\}\not\subseteq\{q_1,q_2\}$ \xmark)
  \item $\Fringe(\{q_1,q_4\}) = \{q_2, q_5\}$ (verified similarly)
\end{itemize}

Interest scores at intermediate states, computed via Eq.~\eqref{eq:rel} (fringe
items satisfy $\mathrm{Rel} = (|\down{L_i}|-1)/|\down{L_i}|$, not 1):
\begin{alignat*}{2}
\mathrm{Rel}(q_5,\{q_1,q_4\}) &= \tfrac{|\{q_1,q_4,q_5\}|-1}{|\{q_1,q_4,q_5\}|}
  = \tfrac{2}{3}, &\quad
\mathrm{Rel}(q_2,\{q_1,q_4\}) &= \tfrac{|\{q_2\}|-1}{|\{q_2\}|} = 0, \\
\mathrm{Rel}(q_4,\{q_1,q_2\}) &= \tfrac{|\{q_1,q_4\}|-1}{|\{q_1,q_4\}|}
  = \tfrac{1}{2}, &\quad
\mathrm{Rel}(q_3,\{q_1,q_2\}) &= \tfrac{|\{q_2,q_3\}|-1}{|\{q_2,q_3\}|}
  = \tfrac{1}{2}.
\end{alignat*}
\begin{alignat*}{2}
\I(u, q_5, \{q_1,q_4\}) &= 0.25\times(0.85+0.80+0.80+\tfrac{2}{3})
  \approx 0.7792, &\quad
\I(u, q_2, \{q_1,q_4\}) &= 0.25\times(0.60+0.70+0.65+0)
  = 0.4875, \\
\I(u, q_4, \{q_1,q_2\}) &= 0.25\times(0.80+0.75+0.70+\tfrac{1}{2})
  = 0.6875, &\quad
\I(u, q_3, \{q_1,q_2\}) &= 0.25\times(0.50+0.65+0.55+\tfrac{1}{2})
  = 0.5500.
\end{alignat*}

Memo table (base cases zero; memoization avoids recomputing $\{q_1,q_2,q_4\}$,
reached by two distinct length-2 paths):
\begin{center}\small
\begin{tabular}{lrc}
\toprule State & $j$ & $\mathtt{memo}$ \\ \midrule
$\{q_1,q_2,q_3\}$ & 0 & 0 \\
$\{q_1,q_2,q_4\}$ & 0 & 0 \\
$\{q_1,q_4,q_5\}$ & 0 & 0 \\
$\{q_1,q_2\}$ & 1 & $\max(0.6875,\,0.5500) = 0.6875$ \quad[$\mathtt{pred}=q_4$] \\
$\{q_1,q_4\}$ & 1 & $\max(0.7792,\,0.4875) \approx 0.7792$ \quad[$\mathtt{pred}=q_5$] \\
$\{q_1\}$ & 2 & $\max(0.4875+0.6875,\;0.6875+0.7792)
  = \max(1.1750,\,1.4667) \approx \mathbf{1.4667}$ \\
\bottomrule
\end{tabular}
\end{center}

Backtracking: $\mathtt{pred}[(\{q_1\},2)] = q_4$, then
$\mathtt{pred}[(\{q_1,q_4\},1)] = q_5$.
Optimal path: $\pi^* = (q_4, q_5)$,
$V^* = \tfrac{11}{16} + \tfrac{187}{240} = \tfrac{352}{240} = \tfrac{22}{15} \approx 1.4667$.

\begin{table}[htb]
\centering\small
\caption{Complete DP path evaluation from $\hat{E}_u = \{q_1\}$,
$k_{\max} = 2$. $\I_2$ values use exact Rel scores per Eq.~\eqref{eq:rel};
$\I(q_5,\cdot)$ involves $\mathrm{Rel}=2/3$, giving the exact
fraction $187/240 = 0.7791\overline{6}$ (a non-terminating
repeating decimal), approximated to 4~d.p.
The state $\{q_1,q_2,q_4\}$ is reached by two paths but evaluated
once (memoization). ``State valid?'' confirms
Corollary~\ref{cor:path_valid}.}
\label{tab:example_dp}
\setlength{\tabcolsep}{4pt}
\begin{tabular}{lllrrl l}
\toprule
\textbf{Step 1} & \textbf{State} & \textbf{Step 2} & $\I_1$ & $\I_2$ & $V$ & \textbf{Valid?} \\
\midrule
$q_4$ & $\{q_1,q_4\}$, $\Fringe=\{q_2,q_5\}$ & $q_5$ & 0.6875 & $\approx$0.7792 & $\approx\mathbf{1.4667}$ & \checkmark \\
      &                                         & $q_2$ & 0.6875 & 0.4875         & 1.1750 & \checkmark \\
\midrule
$q_2$ & $\{q_1,q_2\}$, $\Fringe=\{q_3,q_4\}$ & $q_4$ & 0.4875 & 0.6875 & 1.1750 & \checkmark \\
      &                                         & $q_3$ & 0.4875 & 0.5500 & 1.0375 & \checkmark \\
\bottomrule
\end{tabular}
\end{table}

\paragraph*{Structural verification.}
Path $\pi^* = (q_4, q_5)$ traverses:
$\{q_1\} \xrightarrow{+q_4} \{q_1,q_4\} \xrightarrow{+q_5} \{q_1,q_4,q_5\}$,
each an upward edge in the Hasse diagram of $(\K, \subseteq)$, ascending two levels.
This demonstrates the well-graded property (Proposition~\ref{prop:knowledge_space}(iv)).
Sub-path optimality: $\I(u,q_5,\{q_1,q_4\}) \approx 0.7792 > 0.4875 = \I(u,q_2,\{q_1,q_4\})$
confirms $q_5$ is uniquely optimal from $K_1$, verifying Proposition~\ref{prop:subpath}.
Memoization saving: at scale (20~POIs, $k_{\max}=5$), there are
at most $\binom{20}{5} = 15{,}504$ distinct size-5 state sets,
each reached by up to $5! = 120$ orderings of the same items.
Memoization collapses all orderings reaching the same state into
a single computation, giving a factor-of-$120$ reduction in
evaluations before any structural pruning from $\preceq$.

\subsection{Feedback Processing}
\label{sec:example_feedback}

User $u$ visits $q_4$ with high-confidence signal ($I_{\mathrm{obs}} = 0.9$).
Algorithm~\ref{alg:feedback} executes:

\textbf{Step 1 (preference):}
$p_{u,4}(t') = 0.80 + 0.1 \times (0.9 - 0.80) = 0.81$.

\textbf{Step 2 (BLIM):} With $\beta_4 = 0.05$, $\eta_4 = 0.10$, states containing
$q_4$ receive prior $\times 0.95$; others receive prior $\times 0.05$. After
normalization, $P_u'$ concentrates on states containing $q_4$, with MAP
$\hat{E}_u' = \{q_1, q_4\}$.

\textbf{Step 3 (confirmed state):} $q_4 \in \Fringe(\{q_1\}) = \{q_2,q_4\}$ \checkmark.
$E_u(t') = \{q_1\} \cup \{q_4\} = \{q_1, q_4\} \in \K$ \checkmark.

\textbf{Incremental fringe update:} The sole direct successor of $q_4$ is $q_5$:
$\mathtt{cnt}[q_5] \leftarrow 0$, so $q_5$ enters the fringe. New fringe:
$\Fringe(\{q_1,q_4\}) = \{q_2,q_5\}$, computed in
$O(\mathrm{deg}_H^+(q_4)) = O(1)$.

The visit to $q_4$ is a structural event: the confirmed state advanced one level
in the lattice, unlocking $q_5$ and raising $\mathrm{Rel}(q_5, \hat{E}_u)$ from
$\tfrac{1}{3}$ to $\tfrac{2}{3}$ (fringe item of depth~3, per Eq.~\eqref{eq:rel}). The next pipeline cycle produces a substantively different
recommendation set from any preference-based system lacking the structural layer.

\subsection{Summary of Verified Properties}

\begin{table}[htb]
\centering\small
\caption{Formal properties verified in the worked example.}
\label{tab:example_verification}
\setlength{\tabcolsep}{5pt}
\begin{tabular}{llll}
\toprule
\textbf{Property} & \textbf{Result} & \textbf{Instance check} & \textbf{Outcome} \\
\midrule
States $\{q_1\},\{q_1,q_4\},\{q_1,q_4,q_5\} \in \K$ & Lemma~\ref{lem:fringe_valid} & $\down{q_i}\subseteq K_j$ verified & \checkmark \\
$\pi^*=(q_4,q_5)$ items distinct & Corollary~\ref{cor:path_valid}(ii) & $q_4\neq q_5$; each added before next step & \checkmark \\
Sub-path $(q_5)$ optimal from $\{q_1,q_4\}$ & Proposition~\ref{prop:subpath} & ${\approx}0.7792 > 0.4875$ & \checkmark \\
$\Fringe(\{q_1\})=\{q_2,q_4\}$ & Definition~\ref{def:fringe} & Verified by ideal check & \checkmark \\
Lattice path ascends via fringe & Prop.~\ref{prop:knowledge_space}(iv) & $\{q_1\}\to\{q_1,q_4\}\to\{q_1,q_4,q_5\}$ & \checkmark \\
Fringe update after $q_4$ confirmed & Prop.~\ref{prop:fringe_complexity} & $\mathtt{cnt}[q_5]$ decremented; $q_5$ enters fringe & \checkmark \\
$E_u(t)$ not modified in Phase 1--3 & Remark~\ref{rem:invariant} & Only Alg.~\ref{alg:feedback} modifies $E_u(t)$ & \checkmark \\
\bottomrule
\end{tabular}
\end{table}

\section{Conceptual Analysis}
\label{sec:analysis}

\subsection{Representational Capacity}
\label{sec:repres_capacity}

\begin{definition}[User Model, Item Model, Recommendation Model]
\label{def:models}
A recommender system is characterized by three representational choices.
The \emph{user model} $\mathcal{U}$ specifies the formal object representing a user.
The \emph{item model} $\mathcal{L}$ specifies the formal object representing a location.
The \emph{recommendation model} $\mathcal{M}_{\mathcal{R}}$ specifies the formal
function mapping $(\mathcal{U}, \mathcal{L})$ to a recommendation.
\end{definition}

Table~\ref{tab:representational_analysis} characterizes each paradigm across these
dimensions, with two additional columns: \emph{state completeness} (whether the user
model can represent a user's structural position in a formally defined space) and
\emph{recommendation validity certificate} (whether the system can prove, not merely
estimate, that a recommendation is valid for the current state).

\begin{table}[htb]
\centering\small
\caption{Representational capacity analysis. \cmark\ = supported; $\circ$ = partially
or under restrictions; \xmark\ = not representable. ``Ordered item structure'' means
the item model encodes a partial order, not merely pairwise similarity.
``Validity certificate'' means the system can prove a recommendation is structurally valid.}
\label{tab:representational_analysis}
\setlength{\tabcolsep}{4pt}
\begin{tabular}{lcccccc}
\toprule
\textbf{Dimension}
  & \makecell{\textbf{CF/}\\\textbf{Content}}
  & \makecell{\textbf{Seq.}\\(MK/RNN)}
  & \makecell{\textbf{Seq.}\\(Transf.)}
  & \makecell{\textbf{TTDP}}
  & \makecell{\textbf{CBRS}}
  & \makecell{\textbf{ESRS}\\\textbf{(ours)}} \\
\midrule
User: preference vector         & \cmark  & $\circ$ & \cmark & \xmark & $\circ$ & \cmark \\
User: interaction history       & $\circ$ & \cmark  & \cmark & \xmark & \xmark  & $\circ$ \\
User: cumulative exploration state & \xmark & \xmark & \xmark & \xmark & \xmark & \cmark \\
User: state completeness        & \xmark  & \xmark  & \xmark & \xmark & \xmark  & \cmark \\
Item: attribute vector          & \cmark  & $\circ$ & \cmark & \cmark & \cmark  & $\circ$ \\
Item: ordered structure         & \xmark  & \xmark  & \xmark & \xmark & $\circ$ & \cmark \\
Rec.: relevance estimation      & \cmark  & \cmark  & \cmark & $\circ$ & \cmark & \cmark \\
Rec.: formal path optimization  & \xmark  & \xmark  & \xmark & \cmark & \xmark  & \cmark \\
Rec.: validity certificate      & \xmark  & \xmark  & \xmark & \xmark & $\circ$ & \cmark \\
Rec.: state-dependent scoring   & \xmark  & \xmark  & \xmark & \xmark & \xmark  & \cmark \\
\bottomrule
\end{tabular}
\end{table}

The gap from CF/sequential models to EST (no cumulative state representation) is
\emph{not resolvable by additional data or model capacity}: no amount of training
data teaches a CF or Transformer model to represent the downward-closure property
of exploration states. The converse gap (EST's collaborative signal is coarser than
a full latent factor model) is \emph{resolvable by integration}: EST can adopt
any CF-derived preference estimate as $\mathrm{Pref}_{u,i}$ without modifying
other components.

\subsection{Paradigm-by-Paradigm Structural Analysis}
\label{sec:paradigm_analysis}

\paragraph*{Collaborative Filtering and Content-Based.}
CF's user model is a preference vector encoding preference intensity and similarity,
but no relational structure over items. The question ``is the user ready for this
POI?'' is not representable. CF's latent factor model captures the collaborative
signal with far greater precision than EST's Jaccard-based approximation---but this
gap is resolvable by integration, while the representational gap is not.

\paragraph*{Sequential and Session-Based Recommendation.}
Sequential models learn $f: \mathcal{H} \to \Delta(Q)$ from observed transitions.
Even with infinite training data and perfect capacity, these models cannot reliably
recover a ground-truth surmise order $\preceq$ because: (i) prerequisite transitivity
may not manifest as pairwise co-occurrence if users always visit intermediate items;
(ii) universal prerequisites show no variance in the training data; (iii) geographic
and temporal co-occurrence confounds genuine prerequisites. EST does not compete with
sequential models on next-POI accuracy under standard conditions; it introduces an
orthogonal modeling dimension that sequential models cannot represent by design.

Sequential models excel at capturing \emph{contextual transition patterns}---session
dynamics, physical position, time of day, fatigue---that the cumulative exploration
state $E_u(t)$ does not encode. This is a genuine strength of sequential approaches
that EST does not replicate.

\paragraph*{The Tourist Trip Design Problem.}
TTDP optimizes $\sum_{L_i \in \pi} s_i$ with fixed scores; ESRS optimizes
$\sum_{j=1}^k \I(u, q_j, K_{j-1})$ where the value of visiting $L_i$ depends on
the state $K_{j-1}$. This state-dependence makes the ESRS problem a sequential
decision problem over a state space rather than an orienteering instance, requiring
the Bellman principle (Proposition~\ref{prop:subpath}) for tractability. 
Formally:
TTDP's path optimization objective is a special case of the ESRS DP component
(Eq.~\eqref{eq:bellman}) when $\I$ is constant in $K$ and $\preceq$
contains only reflexive pairs (i.e., $\K(Q,\preceq) = 2^Q$); this equivalence
applies to the optimization structure only and does not extend to the BLIM
state estimator, feedback loop, or cold-start components.
ESRS augmented with time constraints (Remark~\ref{rem:ttdp_dp}) is a strictly
richer model than TTDP.

\paragraph*{Constraint-Based Recommender Systems.}
CBRS models a static requirement profile; EST models a dynamic lattice state.
CBRS is a one-shot filtering paradigm; EST is a multi-step state-transition paradigm.
Two users with identical current interest profiles but different visit histories
receive structurally different recommendation sets in EST but identical ones in CBRS.
CBRS can in principle encode an ordering constraint as a logical rule, but the
resulting system lacks the lattice-theoretic structure enabling EST's algorithmic
guarantees (CBRS earns $\circ$ in Table~\ref{tab:bg_positioning}).

\subsection{A Formal Theory of EST Explanations}
\label{sec:explainability}

\begin{definition}[Structural Explanation]
\label{def:explanation}
A \emph{structural explanation} for the recommendation of POI $L_i$ to user $u$ in
state $\hat{E}_u$ is a pair $(\pi_{\mathrm{pre}}, \sigma)$ where:
$\pi_{\mathrm{pre}}$ is a path in $H$ through $\down{L_i} \setminus \{L_i\}$
passing only through items in $\hat{E}_u$; and
$\sigma$ maps each covering relation in $\pi_{\mathrm{pre}}$ to a natural-language
justification stored in the Location Database.
\end{definition}

\begin{proposition}[Every ESRS Recommendation Admits a Structural Explanation]
\label{prop:explanation_exists}
For any $\hat{E}_u \in \K$ and $L_i \in \Fringe(\hat{E}_u)$, there exists a
structural explanation $(\pi_{\mathrm{pre}}, \sigma)$ for the recommendation of $L_i$.
\end{proposition}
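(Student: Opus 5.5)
The plan is to build the explanation directly from the fringe condition of Definition~\ref{def:fringe}, which gives $\down{L_i}\setminus\{L_i\}\subseteq\hat{E}_u$. Every item of the strict down-set of $L_i$ is therefore already in the working state. So any path in $H$ that stays inside $\down{L_i}\setminus\{L_i\}$ automatically passes only through items of $\hat{E}_u$. The whole task then reduces to exhibiting a suitable Hasse path and attaching justifications to its edges.

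I will split into two cases.

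\textbf{Case 1: $L_i$ is minimal.} If $\down{L_i}=\{L_i\}$, I take $\pi_{\mathrm{pre}}$ to be the empty path and $\sigma$ the empty map. Definition~\ref{def:explanation} is then satisfied vacuously. I will remark that this degenerate explanation simply records that $L_i$ is prerequisite-free, exactly as in the cold-start discussion around Proposition~\ref{prop:coldstart_guarantee}.

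\textbf{Case 2: $L_i$ has a strict predecessor.} I choose a lower cover $p\lessdot L_i$. It exists because $Q$ is finite, and it lies in $\hat{E}_u$ by the fringe condition. I then build a maximal chain $m = p_0\lessdot p_1\lessdot\cdots\lessdot p_r = p$ downward from $p$. Concretely, I repeatedly pick a lower cover of the current element until a minimal element $m$ of $Q$ is reached. This terminates by finiteness, the same well-foundedness argument used in Proposition~\ref{prop:coldstart_guarantee}. Every $p_k$ satisfies $p_k\preceq p\prec L_i$ by transitivity, so $p_k\in\down{L_i}\setminus\{L_i\}\subseteq\hat{E}_u$. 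Each consecutive pair is a covering relation, hence an edge of $H$, so the chain is a path in $H$ of the required form.

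It may be more natural to let the path end with the edge $p\lessdot L_i$. That edge is the one actually justifying readiness for $L_i$, and only its endpoint $L_i$ lies outside $\hat{E}_u$. I will state the construction so that both readings of ``through $\down{L_i}\setminus\{L_i\}$'' are covered: the vertices strictly below $L_i$ are all in $\hat{E}_u$, and the final covering edge into $L_i$ may optionally be appended. As a stronger optional statement, I can take the union of such chains over all lower covers of $L_i$. By the counter argument in Proposition~\ref{prop:fringe_complexity}, the lower covers are exactly what certify fringe membership.

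For $\sigma$, I rely on the Location Database (\S\ref{sec:locdb}), which stores each surmise link as an explicit directed edge. I assume each covering edge of $H$ carries its associated justification record, and define $\sigma$ by lookup on each edge of $\pi_{\mathrm{pre}}$. I expect this to be the only real obstacle, because it is a modelling assumption rather than a mathematical fact. The existence of $\sigma$ requires that every covering relation in $E_H$ has a stored natural-language justification, for example produced during the expert-validation step of Algorithm~\ref{alg:prefixspan}. I will make this hypothesis explicit. If it is not granted, I would fall back on a canonical generated justification per edge, such as one built from the dependency type, so that $\sigma$ is always total.
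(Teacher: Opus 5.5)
Your proposal is correct and is essentially the paper's proof: the same case split on whether $\down{L_i}=\{L_i\}$ (giving the empty explanation), and otherwise a maximal chain of $\down{L_i}\setminus\{L_i\}$ annotated from the Location Database, which you build explicitly by descending through lower covers starting from a lower cover of $L_i$. Your one addition is to state openly the modelling assumption, used tacitly in the paper, that every covering edge has a stored justification; also note that your optional appended edge $p\lessdot L_i$ matches how the paper's five-POI example applies $\sigma$, but it routes the path through $L_i\notin\hat{E}_u$, so keep the unappended chain as the formal witness.
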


\begin{myproof}
Since $L_i \in \Fringe(\hat{E}_u)$, we have $\down{L_i} \setminus \{L_i\} \subseteq
\hat{E}_u$. Two cases arise.
\textit{Case 1} ($\down{L_i} \setminus \{L_i\} = \emptyset$, no prerequisites):
set $\pi_{\mathrm{pre}} = ()$ (the empty chain) and let $\sigma$ map the empty
set of covering relations to the natural-language justification
``$L_i$ is accessible to any user without prior visits.''
The pair $((), \sigma)$ is a valid structural explanation.
\textit{Case 2} ($\down{L_i} \setminus \{L_i\} \neq \emptyset$):
any maximal chain in the finite partial order
$(\down{L_i} \setminus \{L_i\}, \preceq)$ constitutes $\pi_{\mathrm{pre}}$
(non-empty by hypothesis), and its covering relations can be annotated with
semantic justifications from the Location Database.
In both cases the pair $(\pi_{\mathrm{pre}}, \sigma)$ exists.
\end{myproof}

\begin{example}[Five-POI Instance]
When ESRS recommends $q_4$ to a user in state $\{q_1\}$:
$\pi_{\mathrm{pre}} = (q_1)$; $\sigma(q_1 \lessdot q_4)$: ``The Art Gallery's
collection is curated around artefacts whose historical provenance was established
in the medieval period documented in the City Museum. Prior engagement with the City
Museum provides the contextual grounding to understand the Gallery's curatorial logic.''
For subsequently recommending $q_5$ from $\{q_1,q_4\}$:
$\pi_{\mathrm{pre}} = (q_1, q_4)$; $\sigma(q_4 \lessdot q_5)$: ``The Rooftop Bar's
panoramic view synthesizes urban perspectives from the preceding cultural visits.''
\end{example}

\begin{remark}[Distinction from Post-Hoc Explainability]
\label{rem:explainability_distinction}
EST's structural explanations differ from post-hoc methods (LIME \citep{ribeiro2016lime},
SHAP \citep{lundberg2017unified}) in two fundamental ways. First,
Proposition~\ref{prop:explanation_exists} proves existence unconditionally; post-hoc
methods compute approximations with no existence guarantee. Second, the structural
explanation \emph{is} the recommendation process---the chain $\pi_{\mathrm{pre}}$
is the actual sequence of prerequisite constraints determining fringe membership.
No gap exists between explanation and model \citep{rudin2019stop}.
\end{remark}

\subsection{A Taxonomy of EST's Properties}
\label{sec:property_taxonomy}

\subsubsection{Class I: Structural Guarantees}

Unconditional consequences of the mathematical results in Section~\ref{sec:foundations},
holding for any surmise relation on any finite POI set:
\begin{enumerate}[label=(SG\arabic*), leftmargin=3em, noitemsep]
  \item \textbf{Lattice structure} (Propositions~\ref{prop:knowledge_space}, \ref{prop:birkhoff})
  \item \textbf{Fringe validity} (Lemma~\ref{lem:fringe_valid})
  \item \textbf{Path validity} (Corollary~\ref{cor:path_valid})
  \item \textbf{Cold-start entry points} (Proposition~\ref{prop:coldstart_guarantee})
  \item \textbf{Fringe efficiency} (Proposition~\ref{prop:fringe_complexity})
  \item \textbf{DP sub-path optimality} (Proposition~\ref{prop:subpath})
  \item \textbf{Explanation existence} (Proposition~\ref{prop:explanation_exists})
\end{enumerate}

\subsubsection{Class II: Design Properties}

Hold when architectural choices are correctly implemented and preconditions satisfied:
\begin{enumerate}[label=(DP\arabic*), leftmargin=3em, noitemsep]
  \item \textbf{Confirmed state integrity}: $E_u(t)$ always satisfies downward closure
        \emph{iff} Algorithm~\ref{alg:feedback} is the sole modifier and the fringe
        guard is correctly enforced.
  \item \textbf{Cold-start robustness} (Proposition~\ref{prop:collab_coldstart}):
        $\mathrm{Collab}_{u,i}$ is well-defined \emph{iff} $\mathbf{p}_u(t_0) \neq \mathbf{0}$.
  \item \textbf{Beam MAP faithfulness} (Remark~\ref{rem:beam_validity}): exact
        \emph{iff} the true MAP state is in the beam.
  \item \textbf{Surmise acyclicity}: $\preceq$ is a valid partial order \emph{iff}
        Algorithm~\ref{alg:prefixspan} correctly resolves all cycles.
  \item \textbf{Interest score normalization}: $\I \in [0,1]$ \emph{iff} all components
        are normalized and weights sum to 1.
\end{enumerate}

\subsubsection{Class III: Empirical Hypotheses}

Require empirical validation on real data:
\begin{enumerate}[label=(EH\arabic*), leftmargin=3em, noitemsep]
  \item \textbf{Perceived coherence}: ESRS users report higher journey coherence than
        CF/sequential baselines (controlled user study).
  \item \textbf{Exploration depth}: ESRS users achieve higher coverage $|E_u(t)|/|Q|$
        over sessions (longitudinal study).
  \item \textbf{Next-POI accuracy}: ESRS achieves competitive Recall@$K$ and NDCG@$K$
        relative to SASRec, GETNext, BERT4Rec (offline benchmark evaluation, explicitly
        deferred).
  \item \textbf{Cold-start quality}: structural cold-start outperforms popularity-based
        and demographic filtering on first-session quality (A/B test).
  \item \textbf{Surmise validity}: Algorithm~\ref{alg:prefixspan} with expert validation
        produces relations judged correct by domain experts for $\geq 80\%$ of pairs
        (annotation study).
\end{enumerate}

\textbf{Note on EH3}: ESRS is not designed to maximize next-POI accuracy on standard
benchmarks but to maximize semantic coherence. Lower Recall@$K$ than GETNext with
higher exploration coherence would be scientifically significant rather than
damaging---confirming that EST and sequential models optimize different objectives.

\subsection{Conditions for EST Advantage and Limitation}
\label{sec:conditions}

\paragraph*{Conditions favoring EST.}
(1) \emph{Semantically rich domain}: the city contains genuine prerequisite
structure. (2) \emph{Users in unfamiliar environments}: users who cannot self-select
appropriate entry points based on existing familiarity benefit most from the fringe
structure. (3) \emph{Multi-session exploration}: the cumulative state compounds value
across visits. (4) \emph{Sparse interaction data}: structural cold-start requires no
historical data.

\paragraph*{Conditions limiting EST.}
(1) \emph{Invalid surmise relation}: the quality floor is set by the quality of
$\preceq$. (2) \emph{Shallow states}: near-empty states make the fringe nearly the
full prerequisite-free item set, providing little guidance. (3) \emph{High-density
mobility}: rapid accumulation may make the fringe non-binding.
(4) \emph{Order-independent domains}: an empty surmise relation reduces EST to an
unconstrained recommender.

\subsection{Complementarity and Integration Opportunities}
\label{sec:complementarity}

EST and existing paradigms are more complementary than competitive. Four integration
architectures exploit this:

\paragraph*{EST + Transformer.}
Apply a fringe mask to the Transformer output:
$a_{ij}' = a_{ij} \cdot \mathbb{1}[L_i \in \Fringe(E_u(t))]$.
This preserves the Transformer's contextual transition patterns while guaranteeing
prerequisite validity (SG2, SG3).

\paragraph*{EST + GNN.}
A GNN trained on the POI attribute graph (category hierarchy, geographic proximity,
semantic embeddings) can predict surmise edges, addressing the trajectory-data
dependence of Algorithm~\ref{alg:prefixspan} and enabling zero-shot surmise relation
initialization for new cities.

\paragraph*{EST + CF.}
Replace the Jaccard-based collaborative signal with a CF model trained on
$(u, i, E_u(t))$ triples, learning that affinity for $L_i$ changes as the
exploration state evolves.

\paragraph*{EST + TTDP.}
The augmented DP state $(K, t_{\mathrm{elapsed}})$ of Remark~\ref{rem:ttdp_dp}
combines structural prerequisite validity with TTDP time-feasibility, producing
itineraries with both guarantees simultaneously.

\subsection{Theoretical Properties: A Synthesizing View}
\label{sec:theory_synthesis}

EST's mathematical structure forms a three-level hierarchy:

\begin{center}
\begin{tikzpicture}[
  node distance=1.8cm, 
  layer/.style={
    draw=MidnightBlue, thick, rounded corners=4pt,
    minimum width=10cm, minimum height=1cm,
    text width=9.5cm, align=center,
    font=\small\linespread{0.9}\selectfont, 
    fill=MidnightBlue!6
  },
  arrow/.style={-Stealth, thick, MidnightBlue},
  lbl/.style={font=\footnotesize\itshape, right, xshift=2pt} 
]

\node[layer] (l1) {
    \textbf{Level 1: Surmise relation $(Q,\preceq)$} \\ 
    Fringe extraction, $O(n+|E_H|)$ computation
};

\node[layer] (l2) [above of=l1] {
    \textbf{Level 2: Knowledge / Learning Space} \\ 
    Well-graded property, path validity via $\cup$ and $\cap$
};

\node[layer] (l3) [above of=l2] {
    \textbf{Level 3: Distributive Lattice $(\mathcal{K}, \subseteq)$} \\ 
    Birkhoff representation, FCA-based visualization
};

\draw[arrow] (l1) -- (l2) node[midway, lbl] {+closure under union};
\draw[arrow] (l2) -- (l3) node[midway, lbl] {+distributivity (Birkhoff)};

\end{tikzpicture}
\end{center}

Level~1 delivers efficient fringe computation. Level~2 delivers path validity and
the well-graded property. Level~3 delivers the compact canonical representation
connecting EST to FCA. The full strength of EST requires all three levels.

Three open theoretical questions are the following:
(i)~the approximation ratio for beam DP --- specifically, whether
$V_{\mathrm{beam}} \geq (1-\varepsilon(B)) \cdot V^*$ holds and
how $\varepsilon(B)$ depends on the structure of $(Q,\prec)$;
(ii)~the sample complexity for surmise relation inference (PAC-learning bounds
on the number of trajectories required for reliable recovery of $\prec$);
(iii)~the sensitivity of BLIM estimation to parameter misspecification and
its propagation to MAP estimate error.
These questions are precisely statable within the formal framework developed here,
which is itself a consequence of the explicit algebraic structure.

\section{Discussion}
\label{sec:discussion}

\subsection{Implications}
\label{sec:implications}

\subsubsection{For Recommender Systems Research}

EST's most significant implication is not a better algorithm for an existing task
but the introduction of a new \emph{evaluation dimension}: structural validity.
Current metrics (Recall@$K$, NDCG@$K$, MRR, Hit Rate) assess whether a system
predicts the next visited item; they do not assess whether recommended sequences
respect semantic ordering or guide users through coherent exploration narratives.
EST provides the formal apparatus for defining and measuring these properties.

Three concrete research implications follow. First, \emph{to the best of our knowledge, no existing LBRS benchmark contains surmise relation labels}: creating an annotated city trajectory dataset
where domain experts have labeled prerequisite relationships would enable systematic
empirical evaluation and motivate new surmise inference algorithms. Second, EST
provides a formal language for ``coherent exploration journey''---a lattice path
never violating the fringe condition---enabling the auditing of existing systems and
datasets for structural validity. Third, the BLIM inference procedure adapts naturally
to active assessment contexts, enabling minimal probe-question strategies for user
state elicitation at session start.

\subsubsection{For Urban Computing and Tourism}

EST provides theoretical grounding for recommendation systems that pursue both
personalization and structural coherence simultaneously. In \emph{heritage tourism},
the surmise relation encodes interpretive structure built into historic districts;
EST ensures visitors accumulate contextual layers necessary for deep engagement.
In \emph{newcomer orientation}, EST can model urban literacy as a formal exploration
space, guiding new residents through cultural prerequisites for integration.

A critical caution applies: EST's structural guidance is appropriate when users
\emph{seek} structured discovery. Users who prefer free, non-linear exploration may
experience the fringe constraint as paternalistic. An implementation should expose
the structural layer as an opt-in feature and allow users to override the fringe
gate, logging such overrides as potential signals of surmise relation errors.

\subsubsection{For Adjacent Domains}

EST's mathematical foundations are domain-agnostic. Promising applications include:
\emph{museum audio guides} (exhibits as POIs, surmise relation from curatorial
prerequisites, exploration state accumulating across the visit);
\emph{professional training} (skills as POIs, prerequisite graph from program
requirements, DP formulation as multi-skill sequencer);
\emph{academic advising} (courses as POIs, institutional prerequisites as surmise
relation, ESRS DP as degree-plan optimizer);
\emph{digital humanities} (primary sources organized by historiographic dependencies,
fringe guiding researchers from foundational to specialized material).

\subsection{Limitations}
\label{sec:limitations}

\paragraph*{L1. Absence of empirical validation.}
The most significant limitation is the complete absence of empirical evaluation.
No real-world deployment has been conducted; no standard benchmark has been used.
All performance claims are classified as empirical hypotheses precisely to acknowledge
this gap. Empirical investigation is the primary direction for future work.

\paragraph*{L2. The engagement signal operationalization problem.}
The distinction between meaningful and non-meaningful visits must be operationalized
through observable signals (dwell time, check-in, rating), and the threshold
$\theta_d$ is a modeling choice without a theoretically grounded default. A long
dwell time may reflect confusion or rest rather than engaged cultural consumption.
The validity of engagement signals as proxies for meaningful visits is an empirical
question requiring instrumentation beyond standard check-in datasets.

\paragraph*{L3. Monotonicity of the exploration state.}
The confirmed state is monotonically non-decreasing; the model has no mechanism for
exploration state decay, time-discounted visit relevance, or revisit semantics. A
user who visited the City Museum ten years ago and returns with no contextual memory
is modeled identically to one who visited yesterday. Temporal decay is a significant
future research direction.

\paragraph*{L4. Validity of the downward-closure assumption.}
Users frequently engage meaningfully with a POI through prior education, informal
exposure, or group tours---without a formal visit to prerequisites. The downward-closure
invariant may systematically underestimate the structural accessibility of advanced
users. Relaxing it would require replacing the learning space model with a more
permissive structure, losing the lattice-theoretic algorithmic guarantees.

\paragraph*{L5. The surmise relation elicitation bottleneck.}
Deploying ESRS in a new city requires domain experts to validate inferred surmise
links. This validation cost---in time, expert availability, and institutional
access---constitutes a significant deployment barrier. Domain experts may also
disagree: the surmise relation reflects a particular curatorial perspective rather
than a single objective ground truth.

\paragraph*{L6. Cross-city and cross-domain transfer.}
The surmise relation is city-specific with no formal mechanism for transfer, even
when two cities share cultural or architectural structures. A tourist with deep
experience of Rome's classical heritage visiting Athens should have a higher structural
starting point than a first-time visitor, but the current framework models them
identically. Cross-city structural transfer is a significant open problem.

\paragraph*{L7. Privacy of exploration states.}
Accumulated exploration states are detailed behavioral profiles. Knowing that a user
has visited a Holocaust Memorial, an LGBTQ+ History Museum, and a specific religious
institution in sequence provides sensitive identity information. EST deployments must
implement differential privacy mechanisms, user-controlled state reset, and data
minimization policies.

\paragraph*{L8. Computational bottleneck for large exploration spaces.}
The practical threshold at which beam approximation becomes necessary---and its
empirical accuracy loss---has not been characterized on real exploration data.
Tighter complexity bounds for practically relevant structures (sparse, decomposable,
nearly-tree-structured surmise relations) are an important theoretical direction.

\subsection{Future Research Directions}
\label{sec:future_research}

\paragraph*{Priority 1: Empirical implementation and evaluation.}
Dataset construction (annotated LBRS benchmark with expert surmise relation labels);
offline evaluation of Algorithm~\ref{alg:prefixspan}'s surmise recovery accuracy;
recommendation quality benchmarking against SASRec, GETNext, STAN, BERT4Rec on
standard and EST-specific metrics (fringe validity rate, structural coverage,
explanation satisfaction); controlled user study testing EH1 and EH3.

\paragraph*{Priority 2: Theoretical gaps.}
\begin{enumerate}
    \item Approximation ratio for beam DP: establish $V_{\mathrm{beam}} \geq (1-\varepsilon(B)) \cdot V^*$.
    \item Sample complexity of surmise inference: PAC-learning bounds on trajectory requirements for reliable surmise recovery.
    \item  BLIM sensitivity: characterize propagation of parameter misspecification to
MAP estimate error and recommendation quality.
\end{enumerate}
\paragraph*{Priority 3: Surmise relation learning and transfer.}
\begin{enumerate}
    \item Causal inference from trajectory data to distinguish genuine prerequisites from geographic and temporal confounders.
    \item Cross-city transfer via joint POI embeddings across shared cultural heritage.
    \item Active surmise elicitation: minimal query strategies maximally reducing
uncertainty about $\preceq$.
\end{enumerate}
\paragraph*{Priority 4: Temporal dynamics and privacy.}
\begin{enumerate}
    \item Time-discounted exploration states via recency-weighted membership, extending to fuzzy order ideals.
    \item Privacy-preserving EST: differential privacy for state distributions, federated
inference, consent-aware state management.
\end{enumerate}
\paragraph*{Priority 5: Hybrid architectures.}
\begin{enumerate}
    \item Prove that the fringe-masked Transformer preserves validity guarantees SG2 and SG3 and evaluate its effect on next-POI accuracy.
    \item Evaluate GNN-based surmise prediction for zero-shot city initialization.
    \item Design the $(u, i, E_u(t))$ interaction matrix, specify factorization, and evaluate exploration-state-aware CF against standard CF.
\end{enumerate}
\section{Conclusion}
\label{sec:conclusion}

The prerequisite structure of urban experience is real, pervasive, and consequential:
certain visits contextually ground others, and a recommender system blind to this
structure can optimize relevance only in a degraded sense. Current location-based
systems achieve considerable sophistication within this blind spot. Exploration Space
Theory removes it by giving prerequisite structure a rigorous mathematical form.

The framework rests on a single observation: the set of locations a user has
meaningfully visited is an \emph{order ideal} of a prerequisite partial order. This
observation generates an entire mathematical apparatus. Order ideals form a
distributive lattice; Birkhoff's theorem canonically identifies this lattice with
the ideal lattice of the surmise order; the lattice is well-graded; and every
fringe-guided recommendation admits a structural explanation whose existence is
provable. The architecture follows from the mathematics.

The formal core consists of six proven results:
\begin{enumerate}[label=(\textbf{M\arabic*}), leftmargin=3em, noitemsep]
  \item \textit{(Proposition~\ref{prop:knowledge_space})} The family of all order
        ideals of any surmise partial order is a well-graded learning space.
  \item \textit{(Proposition~\ref{prop:birkhoff}, Remark~\ref{rem:birkhoff})}
        The exploration space is a finite distributive lattice canonically isomorphic
        to the ideal lattice of the surmise order via Birkhoff's theorem.
  \item \textit{(Proposition~\ref{prop:fringe_complexity})} The exploration fringe
        is computable in $O(n + |E_H|)$; incremental updates cost
        $O(\mathrm{deg}_H^+(q^*))$.
  \item \textit{(Lemma~\ref{lem:fringe_valid}, Corollary~\ref{cor:path_valid})}
        Any fringe-guided transition produces a valid exploration state; any
        fringe-guided path consists of valid states and non-repeating items.
  \item \textit{(Proposition~\ref{prop:subpath})} Under additive interest gains,
        the DP value function satisfies Bellman's sub-path optimality principle.
  \item \textit{(Proposition~\ref{prop:explanation_exists})} Every ESRS recommendation
        admits a structural explanation faithful to the model by construction.
\end{enumerate}

Building on M1--M6, the paper specifies a complete system: a memoized DP exploiting
state-identity to avoid exponential path enumeration; a BLIM state estimator with EM
parameter learning and beam approximation; a feedback loop as the sole authorized
modifier of the confirmed state; an incremental surmise inference pipeline preventing
spurious relation propagation; and three cold-start strategies, of which the structural one provides
a formal validity guarantee (see \S\ref{sec:coldstart} and Limitation~L5).

EST's limitations are documented and its empirical hypotheses are clearly distinguished
from proven structural guarantees. The absence of empirical validation is the primary
limitation and the primary direction for future work. The prerequisite structure of
urban experience is not a statistical artifact that more data will eventually surface;
it is a semantic property that must be represented explicitly. Knowledge Space Theory,
developed over four decades to model exactly this kind of structured prerequisite
dependency, provides the mathematical language. The translation to urban exploration
introduces new definitions, new proofs, and new algorithms. The empirical question
of whether these tools---deployed in a real system with a real surmise relation---
produce recommendations that feel coherent and meaningful to real users remains open.
It is, we believe, a question worth asking precisely. The theoretical foundations
laid here make that empirical investigation possible.

\small
\bibliographystyle{apalike}
\bibliography{references}

\end{document}